\documentclass[journal,twoside,web]{ieeecolor}
\usepackage{generic}
\usepackage[american]{babel}
\usepackage[babel=true]{csquotes}
\usepackage[hidelinks]{hyperref}
\usepackage{amssymb,amsmath,amsfonts,latexsym,mathtools}
\usepackage{xcolor}
\usepackage{array}
\usepackage{textcomp}
\usepackage{stfloats}
\usepackage{url}
\usepackage{graphicx}
\usepackage{booktabs}
\DeclareUnicodeCharacter{2212}{\textminus}

\hypersetup{
  pdftitle={Competitive One-Step-Ahead Control of Friedkin-Johnsen Networks:
  Potential Games, Stability, and the Price of Competition},
  pdfauthor={Gabriel Gentil and Amit Bhaya}
}

\newtheorem{theorem}{Theorem}
\newtheorem{lemma}{Lemma}
\newtheorem{remark}{Remark}
\newtheorem{proposition}{Proposition}

\newtheorem{corollary}{Corollary}
\newtheorem{definition}{Definition}

\makeatletter
\def\tagform@#1{\maketag@@@{\ignorespaces#1\unskip\@@italiccorr}}
\let\orgtheequation\theequation
\def\theequation{(\orgtheequation)}
\makeatother

\def\BibTeX{{\rm B\kern-.05em{\sc i\kern-.025em b}\kern-.08em
    T\kern-.1667em\lower.7ex\hbox{E}\kern-.125emX}}

\begin{document}
	    \title{Competitive One-Step-Ahead Control of Friedkin--Johnsen Networks: Potential Games, Stability, and the Price of Competition}
	\author{Gabriel Gentil, Amit Bhaya%
        \thanks{This work has been submitted to the IEEE for possible publication. Copyright may be transferred without notice, after which this version may no longer be accessible. This work was  partially supported  by CAPES, Finance Code 001 and by the Conselho Nacional de Desenvolvimento Científico e Tecnológico (CNPq), Brazil, under grant BPP/PQ-Sr  313335/2022-2}
        \thanks{The authors are with the Dept.~of Electrical Engineering, Federal University of Rio de Janeiro, PEE/COPPE/UFRJ, PO Box 68504, Rio de Janeiro, RJ 21945-970, Brazil. E-mail: amit@nacad.ufrj.br (Corresponding author).}
	}

\maketitle
\thispagestyle{plain}
\begin{abstract}
This paper studies competitive one-step-ahead control of Friedkin--Johnsen networks with overlapping player influence. The one-step interaction is an exact potential game with a unique Nash equilibrium obtained from a symmetric positive-definite system. Sequential best-response sweeps converge for every frozen network state, parallel sweeps obey an exact Jacobi condition (and always converge with two players), and one-sweep implementations require an
augmented state--action stability test. For marginal networks, a signed left--right damping condition is sufficient for exact-equilibrium stability and becomes a sharp first-order instability test when its sign is reversed. A resolvent identity clarifies the feedback geometry, while a control-aware centrality identifies the goal conflicts that matter most. We characterize attainable equilibria under unconstrained, convex, and sparse goal restrictions and give a closed form for the same-state welfare loss caused by competition. Numerical examples verify the stability thresholds, geometry, and welfare predictions.
\end{abstract}

\begin{IEEEkeywords}
Friedkin--Johnsen model, opinion dynamics, potential games,
Nash equilibrium, best-response dynamics, Schur stability,
price of competition.
\end{IEEEkeywords}

\section{Introduction}
\label{sec:introduction}

The Friedkin--Johnsen (FJ) model
\cite{friedkin1990social} extends the classical French--Harary and
DeGroot frameworks \cite{French1956,Harary1959,DeGroot1974} by allowing
agents to retain attachment to their initial opinions, thereby
supporting persistent disagreement; see
\cite{Tempo1,Tempo2,Hassani2022,AndersonYe2019} for surveys.
Opinion control has been studied through controllability, pinning, and
influence allocation \cite{Liu2014OpinionControl,Masuda2015OpinionControl}.
Competitive formulations include stubborn groups, Stackelberg games,
and coupled action--opinion dynamics
\cite{GlassGlass2021,Varma_stackelberg,li2023stackelberg,Niazi2021,
Cao_coevolutionary,ye2023coevolution,ShrinateTripathy2025}.
Recent TCNS work studies graph-based targeting, attraction competition,
and dynamic influence maximization
\cite{BiniFrascaRavazziDabbene2022,AoJia2023,
BastopcuEtesamiBasar2025}.

Game-theoretic FJ models commonly let network agents choose their own
opinions and compare equilibrium with a social optimum
\cite{BindelKleinbergOren2015,FotakisKandirosKontonisSkoulakis2023}.
Here the players are instead external decision-makers. Under
one-step-ahead optimal control (OSAOC) \cite{KB2022}, each balances a
predicted tracking error against a quadratic penalty. Previous
competitive OSAOC uses disjoint influence domains
\cite{GG_AB_OSAOC}; related communication-structured games appear in
\cite{JiangMazalovGaoWang23}. With overlapping domains, every scalar
action enters every player's tracking cost, coupling all best
responses.

This coupling raises three questions that are not answered by the
disjoint-domain formulation. First, does the overlapping one-step game
retain a tractable equilibrium structure? Second, do exact,
simultaneous, and sequential implementations have the same convergence
and closed-loop stability properties? Third, how should one distinguish
an equilibrium error caused by structural limitations of the influence
matrix from a welfare loss caused by noncooperative play? The present
paper addresses these questions within a unified potential-game and
closed-loop stability framework. The game is deliberately myopic: each
player minimizes a one-step prediction cost at the current state. It is
therefore distinct from a Markov-perfect dynamic game, in which players
optimize intertemporal objectives while anticipating future strategic
responses.

\subsection{Contributions of the Present Work}
\label{subsec:novelty}

The principal contributions are as follows.
\begin{enumerate}
\item \textbf{Potential-game formulation with overlapping influence.}
The competitive one-step game remains an exact potential game under
overlapping influence and conflicting goals; its unique Nash action is
the solution of a symmetric positive-definite system.

\item \textbf{Separation of equilibrium and implementation.}
Exact, parallel, and sequential protocols have common fixed-point
equations but different convergence properties. We give the exact
frozen-state criteria, prove unconditional parallel convergence for two
players, and give the augmented stability test for one-sweep closed-loop
implementations.

\item \textbf{Closed-loop stability of exact-equilibrium feedback.}
For simple peripheral eigenvalues, a signed left--right damping
condition guarantees Schur stability for sufficiently large penalties.
Its reversed sign gives a sharp first-order large-penalty instability
test. The result covers primitive marginal networks and explains why
unsigned alignment is insufficient for periodic modes.

\item \textbf{Equilibrium geometry and competitive welfare loss.}
We characterize unconstrained and support- or amplitude-constrained
equilibrium families, derive a resolvent representation of the feedback,
introduce a control-aware conflict sensitivity, and separate geometric
target error from the same-state welfare loss of noncooperative play.
\end{enumerate}

\section{Preliminaries}
\label{sec:preliminaries}

\subsection{Notation and Conventions}
\label{subsec:notation}

The sets of real and complex numbers are denoted by $\mathbb{R}$ and
$\mathbb{C}$. The identity and all-ones vectors are denoted by $I$
and $\mathbf{1}$, with dimensions inferred from context; $e_i^{(q)}$
is the $i$th standard basis vector of $\mathbb R^q$ (the superscript is
omitted when unambiguous). For a matrix $X$, $X^\top$ and $X^*$
denote transpose and conjugate transpose. For real symmetric matrices,
$X\succ0$ and $X\succeq0$ denote positive definiteness and
semidefiniteness, whereas $X\geq0$ is entrywise. The symbols
$\|\cdot\|_2$, $\sigma(X)$, and $\rho(X)$ denote the Euclidean or
induced matrix norm, spectrum, and spectral radius. We use
$\operatorname{rank}(X)$, $\ker(X)$, $\operatorname{im}(X)$, and
$\operatorname{diag}(X)$ in their standard senses.

\subsection{Controlled Friedkin--Johnsen Model}
\label{subsec:controlled-fj-model}

The network contains $n$ agents and $p$ players indexed by
$\mathcal{M}:=\{1,\ldots,p\}$. Let $W\in\mathbb{R}^{n\times n}$ be
row-stochastic and let
$\Theta=\operatorname{diag}(\theta_1,\ldots,\theta_n)$, with
$\theta_i\in[0,1]$. Define $A:=(I-\Theta)W$ and
$d:=\Theta x(0)$. With
$B=[\,b_1\ \cdots\ b_p\,]\in\mathbb{R}^{n\times p}$ and
$u(k)\in\mathbb{R}^p$, the controlled dynamics are
\begin{equation}
x(k+1)=Ax(k)+d+Bu(k).
\label{eq:fj_controlled}
\end{equation}
The matrix $A$ is row-substochastic and therefore satisfies
$\rho(A)\leq1$ \cite{Parsegov2017}. The vector $b_m$ is player $m$'s
influence direction. When $b_m$ only identifies targeted nodes and their exposure
strengths, $b_m\geq0$ is natural: the signed direction of intervention
is carried by $u_m\in\mathbb R$, while $g_m$ records the desired signed
opinions. We retain a general real $B$ to allow mixed-sign influence
directions representing antagonistic or contrarian responses.

The boundary case $\rho(A)=1$ has a useful graph interpretation. A set
$\mathcal C\subseteq\{1,\ldots,n\}$ is \emph{closed under $W$} if
$w_{ij}=0$ for every $i\in\mathcal C$ and $j\notin\mathcal C$.

\begin{lemma}[Marginal FJ networks]
\label{lem:marginal-fj}
For $A=(I-\Theta)W$, one has $\rho(A)=1$ if and only if there exists a
nonempty set $\mathcal C$ that is closed under $W$ and contains no
stubborn agent, that is, $\theta_i=0$ for all $i\in\mathcal C$. In
particular, if $A$ is irreducible, then
\[
 \rho(A)=1 \quad\Longleftrightarrow\quad \Theta=0,
\]
in which case $A=W$, $A\mathbf1=\mathbf1$, and $d=0$.
\end{lemma}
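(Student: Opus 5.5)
We prove the two implications of the main equivalence separately, using only nonnegativity of $A$, the row sums $A\mathbf 1=(I-\Theta)\mathbf 1$, and Perron--Frobenius theory for nonnegative matrices. Throughout, $A=(I-\Theta)W\ge0$ has row sums $r_i=1-\theta_i\le1$.

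\emph{Sufficiency.} Let $\mathcal C$ be nonempty, closed under $W$, and free of stubborn agents. Consider the principal submatrix $A_{\mathcal C\mathcal C}$. For $i\in\mathcal C$ we have $\theta_i=0$, and closedness gives $w_{ij}=0$ for $j\notin\mathcal C$. Hence row $i$ of $A$ is row $i$ of $W$, supported inside $\mathcal C$, and it sums to one. Thus $A_{\mathcal C\mathcal C}\mathbf 1=\mathbf 1$, so $1\in\sigma(A_{\mathcal C\mathcal C})$. By monotonicity of the spectral radius over principal submatrices of nonnegative matrices, $\rho(A)\ge\rho(A_{\mathcal C\mathcal C})\ge1$. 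Combined with $\rho(A)\le1$, this gives $\rho(A)=1$.

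\emph{Necessity.} This is the main step. Suppose $\rho(A)=1$. By Perron--Frobenius for nonnegative matrices, there is a nonzero $v\ge0$ with $Av=v$.
\begin{itemize}
\item Let $M=\max_i v_i>0$ and $\mathcal C=\{i: v_i=M\}$.
\item For $i\in\mathcal C$, we have $M=v_i=\sum_j a_{ij}v_j\le r_iM\le M$, so equality holds throughout.
\item Equality forces $r_i=1$, i.e.\ $\theta_i=0$.
\item Equality also forces $v_j=M$ whenever $a_{ij}>0$. Since $\theta_i=0$, we have $a_{ij}=w_{ij}$, so $w_{ij}>0$ implies $j\in\mathcal C$.
\end{itemize}
Hence $\mathcal C$ is nonempty, closed under $W$, and contains no stubborn agent. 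The only delicate point is obtaining a nonnegative eigenvector for the eigenvalue $\rho(A)$ when $A$ may be reducible; this is exactly the weak Perron--Frobenius theorem.

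\emph{Irreducible case.} If $A$ is irreducible, then the only nonempty set closed under $A$ is the full index set. For any $\mathcal C$ as in the main statement, $\theta_i=0$ on $\mathcal C$ gives $a_{ij}=w_{ij}$ there. So $\mathcal C$ closed under $W$ means it is closed under $A$, forcing $\mathcal C=\{1,\dots,n\}$ and hence $\Theta=0$. Conversely, $\Theta=0$ makes $\mathcal C=\{1,\dots,n\}$ admissible. In this case $A=W$, $A\mathbf 1=W\mathbf 1=\mathbf 1$, and $d=\Theta x(0)=0$.
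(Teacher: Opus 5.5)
Your proof is correct and follows the paper's argument essentially step for step. Necessity uses the same maximal-entry set of a nonnegative Perron eigenvector with the same equality chain, and the irreducible case rests on the same fact that no proper nonempty closed set exists; the only cosmetic difference is in sufficiency, where you invoke monotonicity of $\rho$ over principal submatrices, whereas the paper observes that closedness makes $A_{\mathcal C\mathcal C}=W_{\mathcal C\mathcal C}$ a stochastic diagonal block of a block-triangular $A$, so $1\in\sigma(A)$ directly.
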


\begin{proof}
If such a set $\mathcal C$ exists, then the principal block
$A_{\mathcal C\mathcal C}=W_{\mathcal C\mathcal C}$ is stochastic and
closed, so $1\in\sigma(A)$. Conversely, suppose $Ay=y$ for a nonzero
$y\geq0$, which exists by Perron--Frobenius when $\rho(A)=1$. Let
$M=\max_i y_i>0$ and $\mathcal C=\{i:y_i=M\}$. For $i\in\mathcal C$,
\[
 M=(1-\theta_i)\sum_jw_{ij}y_j\leq(1-\theta_i)M\leq M.
\]
Both inequalities are equalities; hence $\theta_i=0$ and $w_{ij}>0$
implies $j\in\mathcal C$. Thus $\mathcal C$ is nonempty, closed, and
non-stubborn. If $A$ is irreducible, no proper nonempty closed set
exists, so $\mathcal C$ is the full node set and $\Theta=0$. The final
claims follow from the definitions of $A$ and $d$.
\end{proof}

\subsection{The One-Step Game}
\label{subsec:static_game}

At time $k$, define the free response
\begin{equation}
z(k):=Ax(k)+d.
\label{eq:free_response}
\end{equation}
Player $m$ has a goal $g_m\in\mathbb{R}^n$, chooses
$u_m\in\mathbb{R}$, and incurs
\begin{equation}
\begin{aligned}
J_m(u;x(k))
&:=\|z(k)+Bu-g_m\|_2^2\\
&\quad+\gamma_m u_m^2,
\qquad \gamma_m>0.
\end{aligned}
\label{eq:osaoc_cost}
\end{equation}

\begin{definition}[The one-step game]
\label{def:static_game}
The static game parametrized by $x(k)$ is
\[
\mathcal{G}_{\mathrm{OS}}(x(k))
:=
\bigl(\mathcal{M},\{\mathbb{R}\}_{m\in\mathcal{M}},
\{J_m(\cdot;x(k))\}_{m\in\mathcal{M}}\bigr).
\]
\end{definition}

\section{The OSAOC as an Exact Potential Game}
\label{sec:potential_game}

An exact potential has the same change as each player's cost under
every unilateral deviation \cite{monderer1996potential}. The present
quadratic game admits such a potential in closed form.

For compactness, define
\begin{equation}
\begin{aligned}
\Gamma&:=\operatorname{diag}(\gamma_1,\ldots,\gamma_p)\succ0,\\
G&:=B^\top B\succeq0,\\
S&:=G+\Gamma\succ0,
\end{aligned}
\label{eq:game-matrices}
\end{equation}
and
\begin{equation}
v_m:=b_m^\top g_m,
\qquad
v:=[\,v_1\ \cdots\ v_p\,]^\top.
\label{eq:aggregated-individual-goals}
\end{equation}

\begin{lemma}[The OSAOC game is an exact potential game]
	\label{lem:potential_game}
	For every fixed $x(k)$, the game $\mathcal{G}_{\mathrm{OS}}(x(k))$
    is an exact potential game with
	\begin{equation} \label{eq:potential_function}
		\Phi(u) = u^{\top}Su + 2u^{\top}(B^{\top}z(k) - v).
	\end{equation}
	The potential is strictly convex and coercive, and its unique Nash
    equilibrium satisfies $Su^* = v-B^{\top}z(k)$.
\end{lemma}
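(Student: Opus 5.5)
The plan is to verify the exact-potential identity directly. I will expand each player's cost as a function of its own action with the other actions held fixed, and compare it with the corresponding slice of $\Phi$. Write $z=z(k)$ and $u=u_m e_m+u_{-m}$, where $u_{-m}$ collects the other players' actions. Expanding \eqref{eq:osaoc_cost} gives
\[
J_m(u)=\|z-g_m+Bu_{-m}\|_2^2+2u_m b_m^\top(z-g_m+Bu_{-m})+(b_m^\top b_m+\gamma_m)u_m^2 .
\]
The first term does not depend on $u_m$. The part depending on $u_m$ is therefore
\[
S_{mm}u_m^2+2u_m\Bigl(\sum_{j\neq m}G_{mj}u_j+b_m^\top z-v_m\Bigr),
\]
where I have used $S_{mm}=G_{mm}+\gamma_m$ and $v_m=b_m^\top g_m$.

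Next I will expand \eqref{eq:potential_function} the same way. Since $S$ is symmetric with off-diagonal entries $S_{mj}=G_{mj}$, the $u_m$-dependent part of $\Phi$ is
\[
S_{mm}u_m^2+2u_m\sum_{j\neq m}G_{mj}u_j+2u_m(b_m^\top z-v_m),
\]
and the remainder depends only on $u_{-m}$. This matches the $u_m$-dependent part of $J_m$ exactly. Hence $J_m(u_m',u_{-m})-J_m(u_m,u_{-m})=\Phi(u_m',u_{-m})-\Phi(u_m,u_{-m})$ for all $m$, $u_m$, $u_m'$ and $u_{-m}$, which is the definition of an exact potential in \cite{monderer1996potential}. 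The one point that needs care is the off-diagonal symmetry, $G_{mj}=b_m^\top b_j=G_{jm}$. It is what allows a single factor $2$ in the cross term of $u^\top Su$ to reproduce each player's cross term.

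For strict convexity, $\nabla^2\Phi=2S$ and $S=G+\Gamma\succ0$, because $G\succeq0$ and $\Gamma\succ0$. Coercivity follows from the bound $\Phi(u)\ge\lambda_{\min}(S)\|u\|_2^2-2\|u\|_2\|B^\top z-v\|_2\to\infty$. Consequently $\Phi$ has a unique minimizer, characterized by $\nabla\Phi=2Su+2(B^\top z-v)=0$, that is, $Su^*=v-B^\top z(k)$.

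It remains to show that Nash equilibria coincide with this minimizer.

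\emph{Minimizer $\Rightarrow$ Nash.} A global minimizer of an exact potential is a Nash equilibrium, since any profitable unilateral deviation would strictly lower $\Phi$.

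\emph{Nash $\Rightarrow$ minimizer.} At any Nash equilibrium, each $J_m$ is minimized in $u_m$. Because $J_m$ is strictly convex in $u_m$ (its leading coefficient is $S_{mm}>0$), this is equivalent to $\partial J_m/\partial u_m=0$. By the matching above, $\partial J_m/\partial u_m=\partial\Phi/\partial u_m$, so every Nash equilibrium satisfies $\nabla\Phi=0$, which is the linear system. Uniqueness of the Nash equilibrium then follows from the invertibility of $S$.

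The argument is routine; the bookkeeping of the cross terms is the only step where an error could slip in.
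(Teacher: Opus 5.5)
Your proof is correct and takes essentially the same route as the paper. The paper matches $\partial J_m/\partial u_m$ with $\partial\Phi/\partial u_m$ by direct differentiation, whereas you match the $u_m$-dependent parts of $J_m$ and $\Phi$, which is the integrated form of the same identity; the strict-convexity argument and the characterization of Nash equilibria as stationary points of $\Phi$ coincide with the paper's, with your explicit coercivity bound adding only detail.
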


\begin{proof}
For every player $m$, direct differentiation gives
\[
 \frac{\partial J_m}{\partial u_m}
 =2b_m^{\top}(z(k)+Bu-g_m)+2\gamma_m u_m
 =\frac{\partial\Phi}{\partial u_m}.
\]
Thus $\Phi$ is an exact potential. Moreover,
$\nabla^2\Phi=2S\succ0$, so $\Phi$ is strictly convex and coercive.
Because each $J_m$ is strictly convex in its own scalar action, a
profile is a Nash equilibrium if and only if every partial derivative
$\partial J_m/\partial u_m$ vanishes. By the potential identity, this is
equivalent to $\nabla\Phi=0$, which has the unique solution stated in
the lemma.
\end{proof}

\begin{remark}[Independence of Potential Structure from Goal Conflict]
	\label{rem:goal_conflict}
	Goal conflict changes only the linear term $-2u^{\top}v$ and hence
    displaces the equilibrium; the potential Hessian and uniqueness are
    determined by $B$ and $\Gamma$.
\end{remark}

\section{Equilibrium Computation and Protocol Equivalence}
\label{sec:protocols}

For a fixed network state $x(k)$, define
\begin{equation}
    r(k)
    :=
    v-B^{\top}z(k)
    =
    v-B^{\top}\bigl(Ax(k)+\Theta x(0)\bigr).
    \label{eq:rhs-equilibrium-system}
\end{equation}
By Lemma~\ref{lem:potential_game}, the unique Nash equilibrium of the one-step game
$\mathcal{G}_{\mathrm{OS}}(x(k))$ is the solution of
\begin{equation}
    S u^{\mathrm{NE}}(k)=r(k),
    \qquad
    S=B^{\top}B+\Gamma \succ 0.
    \label{eq:one-step-NE-system}
\end{equation}

We compare exact solution with one parallel or sequential
best-response sweep per network time step.

For each player $m$, the individual best response to a given profile
$u_{-m}$ is
\begin{equation}
    \operatorname{BR}_{m}(u_{-m};x(k))
    =
    \frac{1}{S_{mm}}
    \left(
        r_m(k)-\sum_{j\neq m}S_{mj}u_j
    \right).
    \label{eq:individual-best-response}
\end{equation}
Parallel and sequential updates are, respectively, the Jacobi and
Gauss--Seidel iterations for \eqref{eq:one-step-NE-system}.

Let
\begin{equation}
    S=D+L+L^{\top},
    \label{eq:S-decomposition}
\end{equation}
where $D=\operatorname{diag}(S)$ and $L$ is strictly lower triangular.

\begin{enumerate}
    \item \textbf{Exact Nash-equilibrium feedback (EQ):}
    \begin{equation}
        S u(k)=r(k).
        \label{eq:exact-equilibrium-update}
    \end{equation}
    Thus $M_{\mathrm{EQ}}=S$ and $N_{\mathrm{EQ}}=0$; this represents
    a centralized solve or a converged inner iteration.

    \item \textbf{Parallel best-response protocol (PBR/Jacobi):}
    \begin{equation}
        D u(k)
        =
        -\bigl(L+L^{\top}\bigr)u(k-1)+r(k).
        \label{eq:parallel-BR-update}
    \end{equation}
    Hence $M_{\mathrm{PBR}}=D$ and
    $N_{\mathrm{PBR}}=-(L+L^\top)$.

    \item \textbf{Sequential best-response protocol (SBR/Gauss--Seidel):}
    Players update in the order $1,\ldots,p$, using new actions for
    preceding players:
    \begin{equation}
        (D+L)u(k)
        =
        -L^{\top}u(k-1)+r(k).
        \label{eq:sequential-BR-update}
    \end{equation}
    Thus $M_{\mathrm{SBR}}=D+L$ and
    $N_{\mathrm{SBR}}=-L^\top$.
\end{enumerate}

All three protocols can be written in the common matrix-splitting
form
\begin{equation}\label{eq:generic-protocol-update}
    M_{\nu} u(k)
    =
    N_{\nu} u(k-1)+r(k),
    \qquad
    S=M_{\nu}-N_{\nu},
\end{equation}
where $\nu\in\{\mathrm{EQ},\mathrm{PBR},\mathrm{SBR}\}$. Their
transients differ, but their fixed points agree whenever convergence
occurs.

\begin{theorem}[Common fixed-point equations]
\label{thm:common-fixed-point}
For every protocol
$\nu\in\{\mathrm{EQ},\mathrm{PBR},\mathrm{SBR}\}$, the set of fixed
points of the corresponding closed-loop update is characterized by
the common equations
\begin{subequations}
\label{eq:common-fixed-point}
\begin{align}
    x^{*}
    &=
    Ax^{*}+\Theta x(0)+Bu^{*},
    \label{eq:common-fixed-point-state}
    \\
    Su^{*}
    &=
    v-B^{\top}\bigl(Ax^{*}+\Theta x(0)\bigr).
    \label{eq:common-fixed-point-control}
\end{align}
\end{subequations}
Consequently, the choice of protocol does not change the closed-loop
fixed-point equations. It only changes the transient trajectory and
the conditions under which that fixed point is reached.
\end{theorem}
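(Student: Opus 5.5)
The plan is to write each closed-loop protocol as a single time-invariant affine recursion on the augmented variable $(x(k),u(k-1))$, and to show directly that its fixed points are exactly the solutions of \eqref{eq:common-fixed-point}. Combining \eqref{eq:fj_controlled} with \eqref{eq:generic-protocol-update} and \eqref{eq:rhs-equilibrium-system}, protocol $\nu$ produces the coupled update
\[
 M_\nu u(k)=N_\nu u(k-1)+v-B^\top\bigl(Ax(k)+\Theta x(0)\bigr),
 \qquad
 x(k+1)=Ax(k)+\Theta x(0)+Bu(k).
\]
In each case $M_\nu$ is invertible. For EQ this holds because $S\succ0$. For PBR and SBR it holds because $S\succ0$ forces $S_{mm}>0$, so $D$ and $D+L$ are nonsingular (the latter being triangular with positive diagonal). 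The recursion is therefore well defined.

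A pair $(x^*,u^*)$ is a fixed point exactly when $x(k)=x^*$ and $u(k-1)=u(k)=u^*$ are reproduced by the update. Substituting these values, the state equation becomes \eqref{eq:common-fixed-point-state} verbatim. The control equation becomes $M_\nu u^*=N_\nu u^*+v-B^\top(Ax^*+\Theta x(0))$. Moving $N_\nu u^*$ to the left and using the splitting identity $S=M_\nu-N_\nu$ recorded in \eqref{eq:generic-protocol-update} gives \eqref{eq:common-fixed-point-control}. The identity can be checked case by case: $S-0$ for EQ, $D+(L+L^\top)$ for PBR, and $(D+L)+L^\top$ for SBR, each of which equals $S$ by \eqref{eq:S-decomposition}. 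Every step is an equivalence, so the converse also holds: any solution of \eqref{eq:common-fixed-point} is a fixed point of every protocol. The fixed-point set is thus identical for all $\nu$.

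For the "consequently" part, I will observe that the matrices $(M_\nu,N_\nu)$ enter only through the transition map and not through the fixed-point characterization. Hence the trajectory, and whether the fixed point is actually reached, may depend on $\nu$. The fixed-point set itself does not. The convergence conditions are deferred to the later stability analysis.

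The main subtlety is the bookkeeping of the augmented state. For EQ, $u(k-1)$ drops out because $N_{\mathrm{EQ}}=0$, so the fixed points are naturally pairs $(x^*,u^*)$. For PBR and SBR the memory variable must also be stationary. I will state explicitly that a fixed point means stationarity of $(x,u)$ jointly, so that all three protocols are compared on the same space. I will not claim existence or uniqueness of solutions of \eqref{eq:common-fixed-point}: when $\rho(A)=1$, the matrix $I-A$ may be singular, and the theorem only asserts that the equations coincide across protocols.
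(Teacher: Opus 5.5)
Your proposal is correct and follows the paper's argument: you substitute a stationary pair $(x^*,u^*)$ into the state update and into the splitting update $M_\nu u = N_\nu u + r$, use $M_\nu - N_\nu = S$, and observe that each step is reversible, which gives the converse. Your additions (invertibility of $M_\nu$, and joint stationarity of $(x,u)$ on the augmented space) are harmless refinements; the only correction is that solvability of the common system is governed by $K=I-(I-P_{\mathrm{EQ}})A$ from Proposition~\ref{prop:protocol-independent-equilibrium-map}, not by $I-A$.
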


\begin{proof}
At a fixed point, the state recursion gives
\eqref{eq:common-fixed-point-state}, while
\eqref{eq:generic-protocol-update} and
$M_\nu-N_\nu=S$ give \eqref{eq:common-fixed-point-control}.
Conversely, these two equations reproduce the fixed-point form of the
state and protocol updates for every $\nu$.
\end{proof}

\begin{remark}[Fixed-point equivalence does not imply convergence]
\label{rem:fixed-point-versus-convergence}
Common fixed points do not imply common convergence: frozen-state PBR
and SBR are governed by their splitting matrices, whereas one-sweep
closed-loop convergence is governed by the augmented dynamics in
Section~\ref{sec:coupling_caveat}.
\end{remark}

\section{Dynamics and Closed-Loop Stability}
\label{sec:stability}

Analyzing the stability of the competitive OSAOC game requires distinguishing between the convergence of the players' control updates at a fixed point in time and the true closed-loop stability of the evolving state-action system. 

\subsection{Convergence of Best-Response Sweeps at a Frozen State}
\label{subsec:frozen-state-convergence}

We first analyze the computational convergence of the
best-response protocols while holding the network state fixed.
This analysis is distinct from the closed-loop problem in which the
state evolves after every best-response sweep.

Fix a state $\bar{x}\in\mathbb{R}^{n}$ and define
\begin{equation}
    \bar{z}
    :=
    A\bar{x}+\Theta x(0),
    \qquad
    \bar{r}
    :=
    v-B^{\top}\bar{z}.
    \label{eq:frozen-state-rhs}
\end{equation}
The Nash equilibrium associated with the frozen state is the unique
solution of
\begin{equation}
    S u^{\mathrm{NE}}=\bar{r},
    \qquad
    S=B^{\top}B+\Gamma\succ0.
    \label{eq:frozen-state-NE-system}
\end{equation}

To avoid confusing the best-response iteration with the network
time index $k$, let $\ell=0,1,2,\ldots$ denote the inner-sweep
index. For a matrix splitting
\begin{equation}
    S=M_{\nu}-N_{\nu},
    \label{eq:frozen-matrix-splitting}
\end{equation}
the corresponding stationary iteration is
\begin{equation}
    M_{\nu} u^{(\ell+1)}
    =
    N_{\nu} u^{(\ell)}+\bar{r}.
    \label{eq:frozen-stationary-iteration}
\end{equation}
Subtracting
\[
    M_{\nu} u^{\mathrm{NE}}
    =
    N_{\nu} u^{\mathrm{NE}}+\bar{r}
\]
from~\eqref{eq:frozen-stationary-iteration} gives the error
recursion
\begin{equation}
\begin{aligned}
    e^{(\ell+1)}
    &=
    T_{\nu} e^{(\ell)},
    &T_{\nu}&:=M_{\nu}^{-1}N_{\nu},\\
    e^{(\ell)}&:=u^{(\ell)}-u^{\mathrm{NE}}.
\end{aligned}
    \label{eq:frozen-error-recursion}
\end{equation}
Therefore, the iteration converges to $u^{\mathrm{NE}}$ from every
initialization if and only if
\begin{equation}
    \rho(T_{\nu})<1.
    \label{eq:frozen-convergence-condition}
\end{equation}

Using the decomposition
\[
    S=D+L+L^{\top},
    \qquad
    D=\operatorname{diag}(S),
\]
the three protocols have the splittings
\begin{align}
    M_{\mathrm{EQ}}
    &=
    S,
    &
    N_{\mathrm{EQ}}
    &=
    0,
    \label{eq:EQ-splitting}
    \\
    M_{\mathrm{PBR}}
    &=
    D,
    &
    N_{\mathrm{PBR}}
    &=
    -\bigl(L+L^{\top}\bigr)
    =
    D-S,
    \label{eq:PBR-splitting}
    \\
    M_{\mathrm{SBR}}
    &=
    D+L,
    &
    N_{\mathrm{SBR}}
    &=
    -L^{\top}
    =
    M_{\mathrm{SBR}}-S.
    \label{eq:SBR-splitting}
\end{align}

\begin{proposition}[Frozen-state convergence]
\label{prop:frozen-state-convergence}
Let
\[
    S=B^{\top}B+\Gamma\succ0,
    \qquad
    D=\operatorname{diag}(S).
\]
Then the following statements hold.

\begin{enumerate}
    \item \textbf{Exact equilibrium feedback:}
    The EQ protocol obtains the Nash equilibrium in one solve:
    \[
        u^{\mathrm{NE}}=S^{-1}\bar{r}.
    \]
    Its iteration matrix is
    \[
        T_{\mathrm{EQ}}=0.
    \]

    \item \textbf{Sequential best response:}
    The SBR/Gauss--Seidel iteration converges to
    $u^{\mathrm{NE}}$ from every initialization:
    \begin{equation}
        \rho\bigl(T_{\mathrm{SBR}}\bigr)
        =
        \rho\left(
            -(D+L)^{-1}L^{\top}
        \right)
        <1.
        \label{eq:SBR-unconditional-convergence}
    \end{equation}

    \item \textbf{Parallel best response:}
    The PBR/Jacobi iteration converges to $u^{\mathrm{NE}}$ from
    every initialization if and only if
    \begin{equation}
        2D-S\succ0.
        \label{eq:PBR-exact-condition}
    \end{equation}
    Since
    \[
        S=G+\Gamma,
        \qquad
        D=\operatorname{diag}(G)+\Gamma,
    \]
    condition~\eqref{eq:PBR-exact-condition} is equivalently
    \begin{equation}
        2\operatorname{diag}(G)+\Gamma-G\succ0.
        \label{eq:PBR-Gram-condition}
    \end{equation}
\end{enumerate}
\end{proposition}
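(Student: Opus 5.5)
The argument takes the three items in order, working from the error recursion \eqref{eq:frozen-error-recursion} and the criterion \eqref{eq:frozen-convergence-condition}. Item 1 is immediate: $N_{\mathrm{EQ}}=0$ gives $T_{\mathrm{EQ}}=0$, and $S\succ0$ by Lemma~\ref{lem:potential_game}, so $u^{\mathrm{NE}}=S^{-1}\bar r$. Items 2 and 3 use the classical energy (Ostrowski--Reich / Householder--John) argument for splittings of a symmetric positive-definite matrix. If $S=M-N\succ0$ and $M+M^{\top}-S\succ0$, then $\rho(M^{-1}N)<1$.

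To prove this directly, let $T=M^{-1}N=I-M^{-1}S$ and take an eigenpair $Ty=\lambda y$ with $y\in\mathbb C^p\setminus\{0\}$. Setting $w=M^{-1}Sy=(1-\lambda)y$, one computes
\[
y^*Sy-(Ty)^*S(Ty)=w^*(M+M^{\top}-S)w .
\]
Hence $(1-|\lambda|^2)\,y^*Sy=|1-\lambda|^2\,y^*(M+M^{\top}-S)y$. We have $\lambda\neq1$, because $\lambda=1$ would force $Sy=0$. The right-hand side is therefore strictly positive, and since $y^*Sy>0$ we get $|\lambda|<1$.

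For SBR, take $M=D+L$. Then $M+M^{\top}-S=D$, and $D\succ0$ because each diagonal entry $S_{mm}=\|b_m\|_2^2+\gamma_m>0$. This gives \eqref{eq:SBR-unconditional-convergence} with no further hypothesis. For PBR, $M=D$ and $M+M^{\top}-S=2D-S$, so $2D-S\succ0$ is sufficient.

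The main obstacle is necessity in item 3, which is where the symmetry of the Jacobi splitting is needed. We will write $T_{\mathrm{PBR}}=I-D^{-1}S=D^{-1/2}\bigl(I-D^{-1/2}SD^{-1/2}\bigr)D^{1/2}$. This is similar to the symmetric matrix $I-\hat S$, where $\hat S:=D^{-1/2}SD^{-1/2}\succ0$, so its eigenvalues are $1-\mu$ for $\mu\in\sigma(\hat S)$, all real with $\mu>0$.

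It follows that $\rho(T_{\mathrm{PBR}})<1$ if and only if $0<\mu<2$ for every $\mu\in\sigma(\hat S)$. The condition $\mu>0$ holds automatically, and $\mu<2$ for all $\mu$ is equivalent to $2I-\hat S\succ0$. By congruence with $D^{1/2}$, this is equivalent to $2D-S\succ0$, which settles both directions of item 3 at once.

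Finally, substituting $S=G+\Gamma$ and $D=\operatorname{diag}(G)+\Gamma$ gives
\[
2D-S=2\operatorname{diag}(G)+2\Gamma-G-\Gamma=2\operatorname{diag}(G)+\Gamma-G,
\]
which yields \eqref{eq:PBR-Gram-condition}.
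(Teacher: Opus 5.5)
Your proof is correct and follows the paper's route: Householder--John with $M_{\mathrm{SBR}}+M_{\mathrm{SBR}}^{\top}-S=D\succ0$ for SBR, and the similarity $T_{\mathrm{PBR}}\sim I-D^{-1/2}SD^{-1/2}$ to obtain both directions of the PBR criterion. The only difference is that you prove the Householder--John bound directly via the energy identity instead of citing it, which makes the argument self-contained. Your separate sufficiency argument for PBR via Householder--John is redundant once the similarity argument is in place.
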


\begin{proof}
The EQ claim is immediate. For SBR,
$M_{\mathrm{SBR}}+M_{\mathrm{SBR}}^\top-S=D\succ0$; hence the
Householder--John theorem for positive-definite splittings
\cite[p.~137]{Salgado_Wise2023} gives
$\rho(T_{\mathrm{SBR}})<1$. For PBR,
\[
T_{\mathrm{PBR}}=I-D^{-1}S
 \sim I-D^{-1/2}SD^{-1/2}.
\]
Since $D^{-1/2}SD^{-1/2}\succ0$, all eigenvalues of
$T_{\mathrm{PBR}}$ lie in $(-1,1)$ exactly when
$D^{-1/2}SD^{-1/2}\prec2I$, equivalently $2D-S\succ0$.
Substitution of $S=G+\Gamma$ gives
\eqref{eq:PBR-Gram-condition}.
\end{proof}

These are frozen-state results; one-sweep closed-loop convergence
requires the augmented analysis in Section~\ref{sec:coupling_caveat}.

\begin{proposition}[Two-player parallel sweeps]
\label{prop:two-player-pbr}
If $p=2$, the frozen-state PBR iteration converges for every
$B$ and every $\Gamma\succ0$.
\end{proposition}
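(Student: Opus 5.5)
By Proposition~\ref{prop:frozen-state-convergence}, the frozen-state PBR iteration converges if and only if $2D-S\succ0$. For $p=2$ this is a statement about a single $2\times 2$ matrix, so the plan is to write that matrix out and check its positive definiteness directly.

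Write $S=\begin{bmatrix} s_{11} & s_{12}\\ s_{12} & s_{22}\end{bmatrix}$. Then
\[
2D-S=\begin{bmatrix} s_{11} & -s_{12}\\ -s_{12} & s_{22}\end{bmatrix}.
\]
The key observation is that this matrix is congruent to $S$:
\[
2D-S = E S E, \qquad E=\operatorname{diag}(1,-1).
\]
Since $E$ is nonsingular and $S\succ0$ by Lemma~\ref{lem:potential_game}, Sylvester's law of inertia gives $2D-S\succ0$.

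As a more elementary check, one can use the leading-minor test instead of congruence. The diagonal entries satisfy $s_{mm}=\|b_m\|_2^2+\gamma_m>0$. The determinant is $s_{11}s_{22}-s_{12}^2=\det S>0$. Hence $2D-S\succ0$, and convergence follows from item~3 of Proposition~\ref{prop:frozen-state-convergence}.

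As a cross-check, I would compute $T_{\mathrm{PBR}}=-D^{-1}(L+L^\top)$ directly. Its eigenvalues are
\[
\pm\frac{|s_{12}|}{\sqrt{s_{11}s_{22}}}.
\]
Writing $s_{12}=b_1^\top b_2$ and $s_{mm}=\|b_m\|_2^2+\gamma_m$, Cauchy--Schwarz gives
\[
|b_1^\top b_2|\le\|b_1\|_2\,\|b_2\|_2<\sqrt{s_{11}s_{22}}.
\]
The inequality is strict because $\gamma_m>0$. So $\rho(T_{\mathrm{PBR}})<1$ holds for every $B$ and every $\Gamma\succ0$.

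There is no real obstacle here. The only point needing care is strictness: the bound must not degenerate when $b_1$ and $b_2$ are collinear. The penalty $\Gamma\succ0$ is exactly what rules this out. The same sign-flip congruence is unavailable for $p\ge3$, because a sign pattern flipping all off-diagonal entries need not exist. This is why the result is specific to two players.
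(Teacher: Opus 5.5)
Your proof is correct and takes the same route as the paper: reduce via Proposition~\ref{prop:frozen-state-convergence} to $2D-S\succ0$, then note that this sign-flipped $2\times2$ matrix has the same positive leading principal minors as $S$. Your congruence $2D-S=ESE$ with $E=\operatorname{diag}(1,-1)$ and your direct computation $\rho(T_{\mathrm{PBR}})=|s_{12}|/\sqrt{s_{11}s_{22}}<1$ are valid restatements of the same fact, and your remark on why this fails for $p\geq3$ is also correct.
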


\begin{proof}
Write $S=\left[\begin{smallmatrix}s_{11}&s_{12}\\s_{12}&s_{22}
\end{smallmatrix}\right]\succ0$. Then
$2D-S=\left[\begin{smallmatrix}s_{11}&-s_{12}\\-s_{12}&s_{22}
\end{smallmatrix}\right]$ has the same positive leading principal
minors as $S$. Thus $2D-S\succ0$, and
Proposition~\ref{prop:frozen-state-convergence} applies.
\end{proof}

For general $p$, a convenient sufficient (though not necessary)
condition follows from strict diagonal dominance:
\begin{equation}
 \gamma_m+\|b_m\|_2^2>
 \sum_{j\ne m}|b_m^\top b_j|,
 \qquad m=1,\ldots,p.
\label{eq:pbr-diagonal-dominance}
\end{equation}
A single conservative bound is
\begin{equation}
\min_m\gamma_m>
\rho\!\left(G-2\operatorname{diag}(G)\right),
\label{eq:pbr-scalar-bound}
\end{equation}
because it implies
$\Gamma-(G-2\operatorname{diag}(G))\succ0$.

\subsection{Exact Nash-Equilibrium Closed-Loop Stability}
\label{subsec:EQ-stability}

Under the exact Nash-equilibrium feedback protocol introduced in
Section~\ref{sec:protocols}, the control vector is
\begin{equation}
    u_{\mathrm{EQ}}(k)
    =
    S^{-1}
    \left[
        v-B^{\top}\bigl(Ax(k)+\Theta x(0)\bigr)
    \right].
    \label{eq:EQ-control-law}
\end{equation}
Substitution into the state equation gives
\begin{equation}
    x(k+1)
    =
    F_{\mathrm{EQ}}(\Gamma)x(k)
    +
    c_{\mathrm{EQ}}(\Gamma),
    \label{eq:EQ-closed-loop}
\end{equation}
where
\begin{align}
    F_{\mathrm{EQ}}(\Gamma)
    &=
    \bigl(I-P_{\mathrm{EQ}}(\Gamma)\bigr)A,
    \label{eq:EQ-transition-matrix}
    \\
    P_{\mathrm{EQ}}(\Gamma)
    &=
    B\bigl(B^{\top}B+\Gamma\bigr)^{-1}B^{\top},
    \label{eq:EQ-influence-operator}
    \\
    c_{\mathrm{EQ}}(\Gamma)
    &=
    \bigl(I-P_{\mathrm{EQ}}(\Gamma)\bigr)\Theta x(0)
    +
    B\bigl(B^{\top}B+\Gamma\bigr)^{-1}v.
    \label{eq:EQ-affine-term}
\end{align}

The next identity exposes the exact feedback geometry and will also be
used in the reachability and welfare analyses.

\begin{lemma}[Resolvent form of exact-equilibrium feedback]
\label{lem:resolvent}
Let $C(\Gamma):=B\Gamma^{-1}B^\top$. Then
\begin{equation}
 I-P_{\mathrm{EQ}}(\Gamma)=(I+C(\Gamma))^{-1},
 \qquad
 F_{\mathrm{EQ}}(\Gamma)=(I+C(\Gamma))^{-1}A.
\label{eq:resolvent-identity}
\end{equation}
Consequently $0\preceq P_{\mathrm{EQ}}(\Gamma)\prec I$, and
$P_{\mathrm{EQ}}(\Gamma)$ is an orthogonal projector only when $B=0$.
If $\Gamma=\varepsilon\bar\Gamma$ and $\varepsilon\downarrow0$, then
$P_{\mathrm{EQ}}(\Gamma)\to\Pi_B$, the orthogonal projector onto
$\operatorname{im}(B)$, and
\begin{equation}
 F_{\mathrm{EQ}}(\varepsilon\bar\Gamma)
 \longrightarrow (I-\Pi_B)A.
\label{eq:small-penalty-limit}
\end{equation}
\end{lemma}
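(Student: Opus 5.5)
The plan is to derive the resolvent identity from the Woodbury (push-through) identity and then read off the remaining claims.

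\textbf{Resolvent identity.} Since $\Gamma\succ0$, the matrix $C(\Gamma)=B\Gamma^{-1}B^\top\succeq0$, so $I+C(\Gamma)\succ0$ is invertible. I will verify directly that $(I-P_{\mathrm{EQ}})(I+B\Gamma^{-1}B^\top)=I$. Expanding, the cross terms are
\[
B\Gamma^{-1}B^\top-BS^{-1}B^\top-BS^{-1}B^\top B\Gamma^{-1}B^\top .
\]
Factoring out $B$ on the left and $\Gamma^{-1}B^\top$ on the right leaves the middle factor
\[
I-S^{-1}\Gamma-S^{-1}G=I-S^{-1}(G+\Gamma)=0 .
\]
This gives the first identity, and multiplying on the right by $A$ gives $F_{\mathrm{EQ}}=(I+C)^{-1}A$ by \eqref{eq:EQ-transition-matrix}.

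\textbf{Spectral bounds and the projector claim.} The identity gives $P_{\mathrm{EQ}}=I-(I+C)^{-1}$. Here $C$ is symmetric PSD with eigenvalues $c_i\ge0$, so $P_{\mathrm{EQ}}$ has eigenvalues $c_i/(1+c_i)\in[0,1)$. Hence $0\preceq P_{\mathrm{EQ}}\prec I$. An orthogonal projector has eigenvalues in $\{0,1\}$, and $1$ is excluded, so $P_{\mathrm{EQ}}$ is a projector only if it equals $0$. That forces $c_i=0$ for all $i$, so $C=0$. Since $\Gamma^{-1}\succ0$, $B\Gamma^{-1}B^\top=0$ implies $B=0$. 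Conversely, $B=0$ gives $P_{\mathrm{EQ}}=0$, which is trivially a projector.

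\textbf{Small-penalty limit.} Write $C(\varepsilon\bar\Gamma)=\varepsilon^{-1}\bar C$ with $\bar C=B\bar\Gamma^{-1}B^\top\succeq0$. Since $\bar\Gamma^{-1}\succ0$, we have $\operatorname{im}(\bar C)=\operatorname{im}(B)$. Diagonalize $\bar C=\sum_i \bar c_i q_iq_i^\top$ orthogonally. Then
\[
P_{\mathrm{EQ}}=\sum_i\frac{\bar c_i}{\varepsilon+\bar c_i}\,q_iq_i^\top .
\]
As $\varepsilon\downarrow0$, each coefficient tends to $1$ if $\bar c_i>0$ and is $0$ otherwise. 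The limit is therefore the orthogonal projector onto $\operatorname{im}(\bar C)=\operatorname{im}(B)$, which is $\Pi_B$. Continuity of matrix multiplication then gives \eqref{eq:small-penalty-limit}.

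The only mildly delicate step is the identification $\operatorname{im}(\bar C)=\operatorname{im}(B)$. I will handle it by the kernel argument: $\ker\bar C=\ker B^\top$, because $y^\top\bar Cy=\|\bar\Gamma^{-1/2}B^\top y\|^2$. Taking orthogonal complements gives the equality of images.
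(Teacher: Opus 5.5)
Your proof is correct and follows essentially the paper's route. You obtain the identity from Woodbury, then use the spectrum $c_i/(1+c_i)$ of $P_{\mathrm{EQ}}$ for the order, projector, and small-penalty claims; the paper's $\sigma_i^2/(1+\sigma_i^2)$, with $\sigma_i$ the singular values of $B\Gamma^{-1/2}$, is the same spectrum because $c_i=\sigma_i^2$. Beyond that, you only fill in details the paper leaves implicit, namely the direct verification of the Woodbury product and the identification $\operatorname{im}(\bar C)=\operatorname{im}(B)$ via kernels.
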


\begin{proof}
The first identity is the Woodbury formula. With
$\widetilde B=B\Gamma^{-1/2}$, the nonzero eigenvalues of
$P_{\mathrm{EQ}}=\widetilde B(I+\widetilde B^\top\widetilde B)^{-1}
\widetilde B^\top$ are $\sigma_i^2/(1+\sigma_i^2)\in(0,1)$.
This proves the order and idempotence claims. The weighted singular
value decomposition, or equivalently the same eigenvalue formula after
scaling $\Gamma$ by $\varepsilon$, gives the limit
$P_{\mathrm{EQ}}\to\Pi_B$.
\end{proof}

Thus the closed-loop eigenproblem can equivalently be written as the
generalized pencil
\begin{equation}
 Aw=\lambda(I+C(\Gamma))w,
\label{eq:closed-loop-pencil}
\end{equation}
which is also a natural starting point when peripheral eigenvalues are
repeated.

We analyze the large-penalty regime by fixing the relative penalty
weights and scaling their common magnitude. Let
\begin{equation}
    \Gamma(t)=t\bar{\Gamma},
    \qquad
    \bar{\Gamma}
    =
    \operatorname{diag}
    (\bar{\gamma}_1,\ldots,\bar{\gamma}_p)
    \succ 0,
    \qquad
    t>0.
    \label{eq:scaled-regularization}
\end{equation}
The matrix $\bar{\Gamma}$ specifies the relative penalties of the
players, whereas $t$ determines their common scale.

For a matrix $A$ satisfying $\rho(A)\leq 1$, define its peripheral
spectrum by
\begin{equation}
    \sigma_{\mathrm{per}}(A)
    :=
    \left\{
        \mu\in\sigma(A):|\mu|=1
    \right\}.
    \label{eq:peripheral-spectrum}
\end{equation}

\begin{definition}[Primitive and periodic nonnegative matrices]
\label{def:primitive-periodic}
For irreducible $A\geq0$, its period is
\[
h(A):=\gcd\{k\geq1:(A^k)_{ii}>0\},
\]
which is independent of $i$. The matrix is \emph{primitive} if
$h(A)=1$ and \emph{periodic} (or imprimitive) if $h(A)>1$
\cite{HornJohnson2012}. If $\rho(A)=1$ and $h(A)=h$, its peripheral
eigenvalues are the $h$th roots of unity (under the irreducibility
hypothesis above).
\end{definition}

\begin{theorem}[Large-penalty stability of exact equilibrium feedback]
\label{thm:EQ-large-penalty-stability}
Let $A$ be row-substochastic, so that $\rho(A)\leq 1$, and let
\[
    F_{\mathrm{EQ}}(t)
    :=
    F_{\mathrm{EQ}}\bigl(\Gamma(t)\bigr)
    =
    \bigl(I-P_{\mathrm{EQ}}(t)\bigr)A,
\]
where
\[
    P_{\mathrm{EQ}}(t)
    =
    B\bigl(B^{\top}B+t\bar{\Gamma}\bigr)^{-1}B^{\top}.
\]
Then the following statements hold.

\begin{enumerate}
    \item If $\rho(A)<1$, there exists $t_0>0$ such that
    \[
        \rho\bigl(F_{\mathrm{EQ}}(t)\bigr)<1
        \qquad
        \text{for every }t\geq t_0.
    \]

    \item Suppose that $\rho(A)=1$ and that every eigenvalue
    $\mu\in\sigma_{\mathrm{per}}(A)$ is simple. For each such
    eigenvalue, let $w_\mu,y_\mu\in\mathbb{C}^{n}$ satisfy
    \begin{equation}
        Aw_\mu=\mu w_\mu,
        \qquad
        y_\mu^{*}A=\mu y_\mu^{*},
        \qquad
        y_\mu^{*}w_\mu=1,
        \label{eq:peripheral-eigenvectors}
    \end{equation}
    where ${}^{*}$ denotes conjugate transpose. Define
    \begin{equation}
        C
        :=
        B\bar{\Gamma}^{-1}B^{\top}.
        \label{eq:first-order-feedback-matrix}
    \end{equation}
    If the spectral damping condition
    \begin{equation}
        \alpha_\mu
        :=
        \operatorname{Re}
        \left(
            y_\mu^{*}Cw_\mu
        \right)
        >0
        \label{eq:spectral-damping-condition}
    \end{equation}
    holds for every $\mu\in\sigma_{\mathrm{per}}(A)$, then there
    exists $t_0>0$ such that
    \begin{equation}
        \rho\bigl(F_{\mathrm{EQ}}(t)\bigr)<1
        \qquad
        \text{for every }t\geq t_0.
        \label{eq:large-penalty-stability-conclusion}
    \end{equation}
    Conversely, if \eqref{eq:peripheral-eigenvectors} holds and
    $\alpha_\mu<0$ for at least one peripheral mode, then
    \begin{equation}
    \begin{aligned}
      \rho\bigl(F_{\mathrm{EQ}}(t)\bigr)
      &\geq 1+\frac{|\alpha_\mu|}{t}+O(t^{-2})>1\\[-1mm]
      &\hspace{8mm}\text{for all sufficiently large }t.
    \end{aligned}
      \label{eq:large-penalty-instability-converse}
    \end{equation}
    Thus the sign test is sharp to first order. When
    $\alpha_\mu=0$, higher-order terms are decisive.
\end{enumerate}
\end{theorem}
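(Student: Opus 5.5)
Treat $\varepsilon:=1/t$ as a small parameter and use Lemma~\ref{lem:resolvent}. Since $\Gamma(t)=t\bar\Gamma$, we have $C(\Gamma(t))=\varepsilon C$ with $C=B\bar\Gamma^{-1}B^\top$, so
\[
F_{\mathrm{EQ}}(t)=(I+\varepsilon C)^{-1}A=A-\varepsilon CA+O(\varepsilon^2).
\]
This expansion is analytic in $\varepsilon$ near $0$, with $F_{\mathrm{EQ}}\to A$ as $\varepsilon\downarrow0$.

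For item 1, $\rho(\cdot)$ is continuous, so $\rho(F_{\mathrm{EQ}}(t))\to\rho(A)<1$, and the claim follows for all $t\ge t_0$.

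For item 2, split $\sigma(A)$ into the peripheral part and the part inside a disk of radius $r<1$. By continuity of eigenvalues (e.g., a Riesz-projection or Rouché argument on small circles), for small $\varepsilon$ the eigenvalues of $F_{\mathrm{EQ}}$ coming from the interior part stay in a disk of radius $(1+r)/2<1$. Each simple peripheral eigenvalue $\mu$ continues to a simple eigenvalue $\lambda_\mu(\varepsilon)$ that is analytic in $\varepsilon$. Standard first-order perturbation for simple eigenvalues, using the normalization $y_\mu^*w_\mu=1$ from \eqref{eq:peripheral-eigenvectors}, gives
\[
\lambda_\mu(\varepsilon)=\mu-\varepsilon\,y_\mu^*CAw_\mu+O(\varepsilon^2)=\mu\bigl(1-\varepsilon\,y_\mu^*Cw_\mu\bigr)+O(\varepsilon^2).
\]
Then, using $|\mu|=1$,
\[
|\lambda_\mu|^2=1-2\varepsilon\,\alpha_\mu+O(\varepsilon^2),
\qquad
|\lambda_\mu(\varepsilon)|=1-\varepsilon\alpha_\mu+O(\varepsilon^2).
\]
If $\alpha_\mu>0$ for every one of the finitely many peripheral $\mu$, then each $|\lambda_\mu|<1$ for small $\varepsilon$. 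Combined with the interior bound, this gives \eqref{eq:large-penalty-stability-conclusion}.

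For the converse, if $\alpha_\mu<0$ then $|\lambda_\mu(\varepsilon)|=1+\varepsilon|\alpha_\mu|+O(\varepsilon^2)$. Hence $\rho(F_{\mathrm{EQ}}(t))\ge|\lambda_\mu|$, which is \eqref{eq:large-penalty-instability-converse}. When $\alpha_\mu=0$ the first-order term vanishes, so the sign of the modulus change is set by the $O(\varepsilon^2)$ term; this explains the closing remark of the theorem.

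The main obstacle is making the perturbation step rigorous and uniform. One must justify analyticity of the simple eigenvalue branch under the analytic perturbation $(I+\varepsilon C)^{-1}A$; Kato's theory or the implicit function theorem applied to $\det(\lambda I-F(\varepsilon))$ at a simple root suffices. One must also confirm that $y_\mu,w_\mu$ are the left/right eigenvectors of the unperturbed $A$, so that $y_\mu^*CAw_\mu=\mu\,y_\mu^*Cw_\mu$. Finally, the interior eigenvalues need to be controlled separately, so that no eigenvalue other than the peripheral branches approaches the unit circle; the continuity argument for the spectrum handles this.
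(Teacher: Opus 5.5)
Your proposal is correct and follows essentially the same route as the paper: the first-order expansion $F_{\mathrm{EQ}}(t)=A-\varepsilon CA+O(\varepsilon^2)$ with $\varepsilon=1/t$, continuity for the non-peripheral spectrum, and simple-eigenvalue perturbation $\lambda_\mu(\varepsilon)=\mu(1-\varepsilon y_\mu^*Cw_\mu)+O(\varepsilon^2)$ for the peripheral modes, in both the stable and the converse direction. The only difference is cosmetic: you get the expansion from the resolvent form $(I+\varepsilon C)^{-1}A$ of Lemma~\ref{lem:resolvent} rather than from a Neumann expansion of $(B^\top B+t\bar\Gamma)^{-1}$, and you spell out the analyticity and spectral-separation details that the paper leaves implicit.
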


\begin{proof}
See Appendix~\ref{app:large-penalty-proofs}.
\end{proof}

\begin{corollary}[Primitive marginal networks]
\label{cor:primitive-EQ-stability}
Suppose that $A$ is primitive and row-substochastic with
$\rho(A)=1$. Let $B\geq 0$ and assume that $B$ has at least one
nonzero column. Then, for every fixed $\bar{\Gamma}\succ0$, there
exists $t_0>0$ such that
\[
    \rho\bigl(F_{\mathrm{EQ}}(t)\bigr)<1
    \qquad
    \text{for every }t\geq t_0.
\]
\end{corollary}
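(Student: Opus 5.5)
The plan is to reduce the corollary to part~2 of Theorem~\ref{thm:EQ-large-penalty-stability}. Since $A$ is primitive with $\rho(A)=1$, Perron--Frobenius gives a peripheral spectrum consisting only of $\mu=1$, and this eigenvalue is algebraically simple. This verifies the simplicity hypothesis, and only one damping coefficient $\alpha_1$ has to be checked. Primitivity includes irreducibility, so Lemma~\ref{lem:marginal-fj} gives $\Theta=0$, $A=W$ and $A\mathbf 1=\mathbf 1$. We may therefore take the right eigenvector to be $w_1=\mathbf 1$. The left Perron vector $y_1$ is entrywise positive, so after normalizing $y_1^\top\mathbf 1=1$ we have $y_1>0$, $y_1$ real, and $y_1^*w_1=1$, as required in \eqref{eq:peripheral-eigenvectors}.

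Next, compute the damping coefficient with $C=B\bar\Gamma^{-1}B^\top$:
\[
\alpha_1=y_1^\top B\bar\Gamma^{-1}B^\top\mathbf 1=\sum_{m=1}^p\frac{1}{\bar\gamma_m}\,(y_1^\top b_m)(\mathbf 1^\top b_m).
\]
Because $B\ge0$ and $y_1>0$, every summand is nonnegative. For any column $b_m\neq0$ with $b_m\ge0$, both $\mathbf 1^\top b_m>0$ and $y_1^\top b_m>0$ hold, the second because $y_1$ has strictly positive entries. The hypothesis that $B$ has at least one nonzero column therefore gives $\alpha_1>0$. Theorem~\ref{thm:EQ-large-penalty-stability}(2) then yields $t_0$ such that $\rho(F_{\mathrm{EQ}}(t))<1$ for all $t\ge t_0$.

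The main point needing care is the use of strict positivity of the left Perron vector; a merely nonnegative $y_1$ could be orthogonal to the support of $b_m$. This positivity follows from irreducibility, and primitivity is needed only to rule out other peripheral eigenvalues $e^{2\pi i j/h}$. For those eigenvalues the eigenvectors are complex, and the sign of $\alpha_\mu$ is not controlled by $B\ge0$. Beyond this, the argument only checks the hypotheses of the theorem.
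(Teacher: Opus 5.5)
Your proposal is correct and follows the paper's proof: you take $w_1=\mathbf 1$ via Lemma~\ref{lem:marginal-fj}, use the positive normalized left Perron vector, and show that $\sum_m(y_1^\top b_m)(b_m^\top\mathbf 1)/\bar\gamma_m>0$ before invoking Theorem~\ref{thm:EQ-large-penalty-stability}. You also state explicitly that primitivity makes $1$ the only peripheral eigenvalue and that it is algebraically simple; the paper uses this without comment.
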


\begin{proof}
The Perron mode has $w_1=\mathbf1$ and a normalized left eigenvector
$\pi>0$ with $\pi^\top\mathbf1=1$. Indeed,
Lemma~\ref{lem:marginal-fj} and irreducibility imply $\Theta=0$ and
$A\mathbf1=\mathbf1$. Moreover,
\begin{align}
    \pi^{\top}
    B\bar{\Gamma}^{-1}B^{\top}
    \mathbf{1}
    &=
    \sum_{m=1}^{p}
    \frac{
        (\pi^{\top}b_m)
        (b_m^{\top}\mathbf{1})
    }{\bar{\gamma}_m}.
    \label{eq:primitive-damping-sum}
\end{align}
It is positive because $B\geq0$ has a nonzero column. The result now
follows from Theorem~\ref{thm:EQ-large-penalty-stability}.
\end{proof}

\begin{remark}[Signed modal damping]
Condition~\eqref{eq:spectral-damping-condition} uses both left and
right eigenvectors; $B^\top w_\mu\ne0$ alone does not determine
whether a periodic mode moves inward. Also, when $\rho(A)=1$,
$F_{\mathrm{EQ}}(t)\to A$: the theorem is existential and the
stability margin generally vanishes as $t\to\infty$, so any finite
penalty must be checked directly.
\end{remark}
\begin{corollary}[Independently varying player penalties]
\label{cor:independent-penalty-stability}
Suppose that the hypotheses concerning the peripheral eigenvalues
in Theorem~\ref{thm:EQ-large-penalty-stability} hold. Let
\[
    \mathcal{I}
    :=
    \left\{
        m\in\{1,\ldots,p\}:b_m\neq0
    \right\}
\]
be the set of players with nonzero influence directions. Assume that
\begin{equation}
    \alpha_{\mu m}
    :=
    \operatorname{Re}
    \left[
        (y_\mu^{*}b_m)
        (b_m^{\top}w_\mu)
    \right]
    >0
    \label{eq:columnwise-damping}
\end{equation}
for every $\mu\in\sigma_{\mathrm{per}}(A)$ and every
$m\in\mathcal{I}$. Then there exists $\gamma^{*}>0$ such that
\begin{equation}
    \min_{m\in\mathcal{I}}\gamma_m>\gamma^{*}
    \quad\Longrightarrow\quad
    \rho\bigl(F_{\mathrm{EQ}}(\Gamma)\bigr)<1.
    \label{eq:independent-penalty-conclusion}
\end{equation}
\end{corollary}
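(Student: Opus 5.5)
Since the players in $\mathcal I^c$ have $b_m=0$, they contribute nothing to $P_{\mathrm{EQ}}(\Gamma)=B(B^\top B+\Gamma)^{-1}B^\top$, so we may delete them. We then work with $C(\Gamma)=B\Gamma^{-1}B^\top=\sum_{m\in\mathcal I}\gamma_m^{-1}b_mb_m^\top$. By Lemma~\ref{lem:resolvent}, $F_{\mathrm{EQ}}(\Gamma)=(I+C(\Gamma))^{-1}A$. The penalties no longer move along a ray, so the idea is to parametrize by the small vector $\varepsilon=(\varepsilon_m)_{m\in\mathcal I}$, $\varepsilon_m:=\gamma_m^{-1}\in[0,1/\gamma^*)$. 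This gives $F(\varepsilon)=(I+\sum_m\varepsilon_m b_mb_m^\top)^{-1}A$, which is analytic in $\varepsilon$ near $0$, with $F(0)=A$. If $\rho(A)<1$, continuity of the spectral radius gives the result immediately. So assume $\rho(A)=1$ with simple peripheral eigenvalues.

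\textbf{Peripheral eigenvalues.} Fix $\mu\in\sigma_{\mathrm{per}}(A)$. Because $\mu$ is simple, standard analytic perturbation theory gives an analytic eigenvalue branch $\lambda_\mu(\varepsilon)$ with $\lambda_\mu(0)=\mu$. Since $F(\varepsilon)=A-C_\varepsilon A+O(\|\varepsilon\|^2)$, its gradient is
\[
\partial_{\varepsilon_m}\lambda_\mu(0)=-y_\mu^*b_mb_m^\top A w_\mu=-\mu\,(y_\mu^*b_m)(b_m^\top w_\mu).
\]
Therefore
\[
|\lambda_\mu(\varepsilon)|^2=1-2\sum_m\varepsilon_m\alpha_{\mu m}+O(\|\varepsilon\|^2).
\]
Let $a:=\min_{\mu,m}\alpha_{\mu m}>0$; this is a minimum over a finite set. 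Using $\varepsilon_m\ge0$ and $\|\varepsilon\|\le\sum_m\varepsilon_m$, we get
\[
|\lambda_\mu(\varepsilon)|^2\le 1-2a\sum_m\varepsilon_m+K\Bigl(\sum_m\varepsilon_m\Bigr)^2,
\]
which is $<1$ whenever $0<\sum_m\varepsilon_m<a/K$.

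\textbf{The main obstacle.} The difficulty is that this second-order bound must hold uniformly over all nonnegative directions in $\varepsilon$, not just along a single ray as in Theorem~\ref{thm:EQ-large-penalty-stability}. This is exactly why the hypothesis is imposed columnwise: it makes the first-order term a positive combination of the $\varepsilon_m$, bounded below by $a\sum_m\varepsilon_m$ on the entire orthant. The remainder bound $K$ follows from analyticity of $\lambda_\mu$ on a polydisc around $0$. That analyticity holds because $\mu$ remains a simple, isolated eigenvalue for small $\varepsilon$.

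\textbf{Remaining eigenvalues and conclusion.} The non-peripheral eigenvalues of $A$ satisfy $|\lambda|\le r<1$. By continuity of the spectrum, through a Riesz-projection or Rouché argument, the corresponding eigenvalues of $F(\varepsilon)$ stay inside a disc of radius $(1+r)/2$ for $\|\varepsilon\|$ small. The peripheral branches stay near their limits $\mu$ and so do not mix with these. Now choose $\gamma^*$ large enough that $\min_{m\in\mathcal I}\gamma_m>\gamma^*$ forces every $\varepsilon_m$ into the required neighbourhood. Then every eigenvalue of $F$ lies strictly inside the unit disc. Note that $\sum_m\varepsilon_m>0$ holds automatically because $\mathcal I$ is nonempty; if $B=0$, the hypothesis together with $\rho(A)=1$ would make the statement vacuous or false, so $\mathcal I\neq\emptyset$ is implicitly required. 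Hence $\rho(F_{\mathrm{EQ}}(\Gamma))<1$.
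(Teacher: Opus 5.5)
Your proposal is correct and follows essentially the same route as the paper. You delete the zero columns, expand $F_{\mathrm{EQ}}$ to first order in the inverse penalties, and use columnwise positivity of $\alpha_{\mu m}$ to bound the inward shift of each simple peripheral eigenvalue below by a multiple of the perturbation size, uniformly over the nonnegative orthant, so that it dominates a quadratic remainder. The difference is only in bookkeeping: the paper writes $\Gamma^{-1}=\delta R$ with $\delta=(\min_{m\in\mathcal I}\gamma_m)^{-1}$ and appeals to compactness of $\mathcal R$, whereas you use joint analyticity of the simple eigenvalue branch in $\varepsilon=(\gamma_m^{-1})_{m\in\mathcal I}$ with an $\ell^1$ bound, which makes explicit the uniform remainder the paper only asserts.
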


\begin{proof}
See Appendix~\ref{app:large-penalty-proofs}.
\end{proof}

\subsection{Iterative Modes and the Coupling Caveat}\label{sec:coupling_caveat}
In resource-constrained settings, the players may perform only one
PBR or SBR sweep per network time step. The state then evolves
simultaneously with the action update, so the two recursions must be
analyzed as one augmented system.

\begin{theorem}[Closed-loop stability of one-sweep protocols]
	\label{thm:augmented_stability}
For $\nu\in\{\mathrm{PBR},\mathrm{SBR}\}$, define
\[
Q_\nu:=BM_\nu^{-1}B^\top,
\qquad
\xi(k):=\begin{bmatrix}x(k)\\u(k-1)\end{bmatrix}.
\]
The one-sweep closed loop is
\begin{equation}
\xi(k+1)=\mathcal A_\nu\xi(k)+c_\nu,
\label{eq:augmented-dynamics}
\end{equation}
where
\begin{equation}
\mathcal A_\nu=
\begin{bmatrix}
(I-Q_\nu)A & BM_\nu^{-1}N_\nu\\
-M_\nu^{-1}B^\top A & M_\nu^{-1}N_\nu
\end{bmatrix}
\label{eq:augmented_matrix}
\end{equation}
and
\begin{equation}
c_\nu=
\begin{bmatrix}
(I-Q_\nu)d+BM_\nu^{-1}v\\
M_\nu^{-1}(v-B^\top d)
\end{bmatrix}.
\label{eq:augmented-affine-term}
\end{equation}
The affine system is globally attracted to a fixed point if and only
if $\rho(\mathcal A_\nu)<1$; under this condition the fixed point is
necessarily unique.
\end{theorem}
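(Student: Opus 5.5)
The plan has two parts: derive the augmented recursion \eqref{eq:augmented-dynamics} by direct substitution, then invoke the standard characterization of global attraction for affine linear recursions.

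\textbf{Step 1 (derivation).} At network time $k$ the one-sweep protocol computes $u(k)$ from $u(k-1)$ and the current state through \eqref{eq:generic-protocol-update}. Solving for $u(k)$ and using $r(k)=v-B^\top(Ax(k)+d)$ gives
\[
u(k)=-M_\nu^{-1}B^\top A\,x(k)+M_\nu^{-1}N_\nu u(k-1)+M_\nu^{-1}(v-B^\top d).
\]
This expression is the second block row of $\mathcal A_\nu$ together with the second block of $c_\nu$. Substituting it into $x(k+1)=Ax(k)+d+Bu(k)$ yields
\[
x(k+1)=(I-Q_\nu)A\,x(k)+BM_\nu^{-1}N_\nu u(k-1)+(I-Q_\nu)d+BM_\nu^{-1}v,
\]
which is the first block row and the first block of $c_\nu$. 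Invertibility of $M_\nu$ must be checked here. $M_{\mathrm{PBR}}=D$ is invertible because its diagonal entries are $\gamma_m+\|b_m\|_2^2>0$. $M_{\mathrm{SBR}}=D+L$ is lower triangular with the same positive diagonal, so it is also invertible.

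\textbf{Step 2 (affine dynamics).} Write $\xi(k+1)=\mathcal A\xi(k)+c$.

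\emph{Sufficiency.} If $\rho(\mathcal A)<1$, then $I-\mathcal A$ is invertible, so $\xi^\star=(I-\mathcal A)^{-1}c$ is the unique fixed point. The error $\xi(k)-\xi^\star$ satisfies $\xi(k)-\xi^\star=\mathcal A^k(\xi(0)-\xi^\star)\to0$ for every initial condition.

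\emph{Necessity.} Suppose every trajectory converges to a fixed point $\xi^\star$, which then exists. Taking two initial conditions and subtracting their trajectories shows that $\mathcal A^k\delta\to0$ for every $\delta$. Hence $\mathcal A^k\to0$, which is equivalent to $\rho(\mathcal A)<1$: take an eigenvector $w$ with $|\lambda|\ge1$, and $\mathcal A^k w=\lambda^k w$ does not tend to zero. For complex $w$, apply the argument to its real and imaginary parts, which are real initial differences.

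\emph{Uniqueness.} Under $\rho(\mathcal A)<1$, uniqueness follows from invertibility of $I-\mathcal A$. It also follows from the necessity argument, since any second fixed point would be a trajectory that does not converge to $\xi^\star$.

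Finally, these fixed points coincide with those in Theorem~\ref{thm:common-fixed-point}.

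The only delicate point is fixing what ``globally attracted'' means. I will read it as: every trajectory converges to one common point, independently of the initial condition. Under that reading the statement is the classical fact about affine maps, and the block algebra in Step 1 is routine.
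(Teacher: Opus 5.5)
Your proposal is correct and follows essentially the same route as the paper. Both arguments solve the splitting update for $u(k)$, substitute it into the state equation to read off the two block rows of $\mathcal A_\nu$ and $c_\nu$, and then invoke the standard fact that an affine time-invariant recursion is attracted to a single point from every initial condition exactly when its linear part is Schur. You also add details the paper omits: invertibility of $M_\nu$ and an explicit necessity argument. Your reading of ``globally attracted'' as convergence to one common point is the one the equivalence requires. With initial-condition-dependent limits it would fail: for example, with $B=0$ and a primitive marginal $A$, every trajectory converges while $\rho(\mathcal A_\nu)=1$.
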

\begin{proof}
Solving the splitting update \eqref{eq:generic-protocol-update} gives
\begin{align}
u(k)
&=M_\nu^{-1}N_\nu u(k-1)-M_\nu^{-1}B^\top Ax(k)
\nonumber\\
&\quad+M_\nu^{-1}(v-B^\top d).
\label{eq:one-sweep-control-expanded}
\end{align}
Substitution into \eqref{eq:fj_controlled} yields the upper block of
\eqref{eq:augmented-dynamics}; \eqref{eq:one-sweep-control-expanded}
is its lower block. The resulting affine time-invariant system
converges from every initial condition if and only if its homogeneous
transition matrix is Schur stable, which is equivalent to
$\rho(\mathcal A_\nu)<1$.
\end{proof}

In particular, $(I-Q_\nu)A$ is only a diagonal block of
$\mathcal A_\nu$; one-sweep stability must be tested using
$\rho(\mathcal A_\nu)$, not the diagonal blocks separately. Neither
frozen-state convergence nor one-sweep stability implies the other.
For the reverse separation, take
\[
A=
\begin{bmatrix}
0&1\\
0.3&0.6
\end{bmatrix},
\qquad
B=
\begin{bmatrix}
1&1&1\\
0&0&0
\end{bmatrix},
\qquad
\Gamma=\frac12 I_3 .
\]
This $A$ is a valid row-substochastic FJ matrix. Here
\[
\sigma(T_{\mathrm{PBR}})
=
\left\{-\frac43,\frac23,\frac23\right\},
\qquad
\rho(T_{\mathrm{PBR}})=\frac43>1,
\]
whereas
\[
\sigma(\mathcal A_{\mathrm{PBR}})
\approx
\left\{
0.723109,\,
-0.728221\pm0.151198\,\mathrm{i},\,
\frac23,\frac23
\right\},
\]
so $\rho(\mathcal A_{\mathrm{PBR}})=0.743752<1$. Thus the reverse
separation persists under the natural restriction $B\geq0$.

\section{Equilibrium Geometry and Welfare}
\label{sec:equilibrium-geometry}

The implementation protocols introduced in
Section~\ref{sec:protocols} generally have different transient
dynamics. Nevertheless, Theorem~\ref{thm:common-fixed-point} shows
that their fixed-point equations are identical. Consequently, the
set of closed-loop equilibria is determined by the Nash equilibrium
matrix $S$, rather than by the mode-specific splitting matrices
$M_{\nu}$ and $N_{\nu}$.

Throughout this section, define
\begin{equation}
    d:=\Theta x(0),
    \qquad
    S:=B^{\top}B+\Gamma.
    \label{eq:equilibrium-geometry-operators}
\end{equation}
The operator $P_{\mathrm{EQ}}=BS^{-1}B^\top$ introduced in
\eqref{eq:EQ-influence-operator} is common to the EQ, PBR, and SBR
fixed-point equations.

\subsection{The Common Closed-Loop Equilibrium Map}
\label{subsec:common-equilibrium-map}

At a fixed point, the state and control vectors satisfy
\begin{subequations}
\label{eq:fixed-point-system-repeated}
\begin{align}
    x^{*}
    &=
    Ax^{*}+d+Bu^{*},
    \label{eq:fixed-point-state-repeated}
    \\
    Su^{*}
    &=
    v-B^{\top}(Ax^{*}+d).
    \label{eq:fixed-point-control-repeated}
\end{align}
\end{subequations}
Solving~\eqref{eq:fixed-point-control-repeated} for the equilibrium
control gives
\begin{equation}
    u^{*}
    =
    S^{-1}v
    -
    S^{-1}B^{\top}(Ax^{*}+d).
    \label{eq:equilibrium-control-eliminated}
\end{equation}
Substitution into~\eqref{eq:fixed-point-state-repeated} yields
\begin{align}
    x^{*}
    &=
    Ax^{*}+d
    +
    BS^{-1}v
    -
    BS^{-1}B^{\top}(Ax^{*}+d)
    \nonumber\\
    &=
    (I-P_{\mathrm{EQ}})Ax^{*}
    +
    (I-P_{\mathrm{EQ}})d
    +
    BS^{-1}v.
    \label{eq:equilibrium-state-elimination}
\end{align}

The equilibrium resolvent is
\begin{equation}
    K
    :=
    I-F_{\mathrm{EQ}}
    =
    I-(I-P_{\mathrm{EQ}})A.
    \label{eq:equilibrium-resolvent}
\end{equation}

\begin{proposition}[Protocol-independent equilibrium map]
\label{prop:protocol-independent-equilibrium-map}
Suppose that
\begin{equation}
    K=I-(I-P_{\mathrm{EQ}})A
    \label{eq:K-nonsingular-assumption}
\end{equation}
is nonsingular. Then every implementation protocol
$\nu\in\{\mathrm{EQ},\mathrm{PBR},\mathrm{SBR}\}$ has the same
closed-loop fixed point, given by
\begin{equation}
    x^{*}(v)
    =
    x_{\mathrm{base}}+Hv,
    \label{eq:common-affine-equilibrium-map}
\end{equation}
where
\begin{align}
    x_{\mathrm{base}}
    &:=
    K^{-1}(I-P_{\mathrm{EQ}})d,
    \label{eq:zero-goal-baseline}
    \\
    H
    &:=
    K^{-1}BS^{-1}.
    \label{eq:common-equilibrium-sensitivity}
\end{align}
The corresponding equilibrium control is
\begin{equation}
    u^{*}(v)
    =
    S^{-1}
    \left[
        v-B^{\top}\bigl(Ax^{*}(v)+d\bigr)
    \right].
    \label{eq:common-equilibrium-control}
\end{equation}
\end{proposition}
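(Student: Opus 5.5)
The plan is to reduce everything to Theorem~\ref{thm:common-fixed-point}, which says that the fixed points of all three protocols satisfy the same pair \eqref{eq:common-fixed-point-state}--\eqref{eq:common-fixed-point-control}. These are exactly \eqref{eq:fixed-point-state-repeated}--\eqref{eq:fixed-point-control-repeated} once we use $d=\Theta x(0)$. It therefore suffices to show that this protocol-independent system has exactly one solution $(x^*,u^*)$ and that the solution has the stated form. The splittings $M_\nu,N_\nu$ then play no further role.

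First, I take any solution $(x^*,u^*)$ of the fixed-point system. Because $S\succ0$, equation \eqref{eq:fixed-point-control-repeated} can be solved uniquely for $u^*$ in terms of $x^*$, giving \eqref{eq:equilibrium-control-eliminated}, which is \eqref{eq:common-equilibrium-control}. Substituting into the state equation gives \eqref{eq:equilibrium-state-elimination}, and rearranging yields
\[
Kx^* = (I-P_{\mathrm{EQ}})d + BS^{-1}v ,
\qquad K=I-(I-P_{\mathrm{EQ}})A .
\]
Since $K$ is nonsingular by assumption, $x^*$ is uniquely determined:
\[
x^* = K^{-1}(I-P_{\mathrm{EQ}})d + K^{-1}BS^{-1}v = x_{\mathrm{base}}+Hv .
\]

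Conversely, I define $x^*$ by this formula and $u^*$ by \eqref{eq:common-equilibrium-control}. The control equation then holds by construction. Reversing the algebra of \eqref{eq:equilibrium-state-elimination} recovers \eqref{eq:fixed-point-state-repeated}, because $Ax^*+d+Bu^*$ expands to exactly the right-hand side of \eqref{eq:equilibrium-state-elimination}. This shows existence. Together with the previous step, the fixed-point set is the single point $(x^*(v),u^*(v))$.

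Finally, Theorem~\ref{thm:common-fixed-point} makes this point the fixed point of every $\nu$. For PBR and SBR, the lagged input $u(k-1)$ equals $u^*$ at a fixed point, so the fixed point of the augmented state $\xi$ is $(x^*,u^*)$ in each case.

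The only step needing care is this converse direction, and specifically not overclaiming. Nonsingularity of $K$ gives existence and uniqueness of the fixed point, not convergence to it. Convergence still depends on $\rho(F_{\mathrm{EQ}})<1$ or $\rho(\mathcal A_\nu)<1$, as noted in Remark~\ref{rem:fixed-point-versus-convergence}. The remaining steps are routine linear algebra.
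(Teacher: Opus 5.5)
Your proposal is correct and takes essentially the same route as the paper: eliminate $u^*$ from the common fixed-point equations of Theorem~\ref{thm:common-fixed-point}, obtain $Kx^*=(I-P_{\mathrm{EQ}})d+BS^{-1}v$, and invert $K$. Two of your points are left implicit in the paper's proof and are sound additions: the explicit converse check that the formula satisfies both fixed-point equations (so you get existence as well as uniqueness), and the caveat that nonsingularity of $K$ does not imply convergence.
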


\begin{proof}
Equation~\eqref{eq:equilibrium-state-elimination} gives
$Kx^*=(I-P_{\mathrm{EQ}})d+BS^{-1}v$. Multiplication by $K^{-1}$ yields
\eqref{eq:common-affine-equilibrium-map}; the control follows from
\eqref{eq:fixed-point-control-repeated}. Only the common fixed-point
equations are used, so the map is protocol independent.
\end{proof}

\begin{remark}[Relationship with closed-loop stability]
\label{rem:equilibrium-map-and-stability}
For EQ, $\rho(F_{\mathrm{EQ}})<1$ implies that
$K=I-F_{\mathrm{EQ}}$ is nonsingular. Likewise,
$\rho(\mathcal A_\nu)<1$ for a one-sweep protocol implies
nonsingularity of $K$, since a vector in $\ker(K)$ would generate an
eigenvector of $\mathcal A_\nu$ associated with one: explicitly, if
$Kx=0$ then
\[
 \xi=\operatorname{col}\bigl(x,-S^{-1}B^\top Ax\bigr)
\]
satisfies $\mathcal A_\nu\xi=\xi$ for either one-sweep splitting.
\end{remark}

\subsection{The Reachable Equilibrium Set}
\label{subsec:reachable-equilibrium-set}

Assume that the aggregated goal vector $v$ may vary over
$\mathbb{R}^{p}$. The set of equilibrium states that can be produced
by varying $v$ is
\begin{align}
    \mathcal{E}
    &:=
    \left\{
        x^{*}(v):v\in\mathbb{R}^{p}
    \right\}
    \nonumber\\
    &=
    \left\{
        x_{\mathrm{base}}+Hv:
        v\in\mathbb{R}^{p}
    \right\}
    \nonumber\\
    &=
    x_{\mathrm{base}}+\operatorname{im}(H).
    \label{eq:common-reachable-equilibrium-set}
\end{align}
Thus, $\mathcal{E}$ is an affine subspace. It is common to every
implementation protocol that converges to the Nash fixed point.

\begin{proposition}[Dimension of the reachable equilibrium set]
\label{prop:reachable-equilibrium-dimension}
Suppose that $K$ is nonsingular. Then
\begin{equation}
    \operatorname{rank}(H)
    =
    \operatorname{rank}(B),
    \label{eq:H-rank-equals-B-rank}
\end{equation}
and therefore
\begin{equation}
    \dim(\mathcal{E})
    =
    \operatorname{rank}(B).
    \label{eq:reachable-set-dimension}
\end{equation}
In particular, $\mathcal{E}$ is a proper affine subspace of
$\mathbb{R}^{n}$ if and only if
\begin{equation}
    \operatorname{rank}(B)<n.
    \label{eq:proper-reachable-subspace}
\end{equation}
\end{proposition}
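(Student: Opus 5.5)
Since $H=K^{-1}BS^{-1}$ with $K$ nonsingular by hypothesis and $S=B^{\top}B+\Gamma\succ0$ nonsingular by Lemma~\ref{lem:potential_game}, the plan is to show that multiplying $B$ on the left and right by invertible matrices does not change its rank. Write
\[
 \operatorname{im}(H)=K^{-1}\operatorname{im}(BS^{-1})=K^{-1}\operatorname{im}(B).
\]
The second equality holds because $S^{-1}$ maps $\mathbb R^p$ bijectively onto itself. The first holds because $K^{-1}$ is an isomorphism of $\mathbb R^n$, so it preserves dimension. Hence
\[
\operatorname{rank}(H)=\dim\bigl(K^{-1}\operatorname{im}(B)\bigr)=\operatorname{rank}(B),
\]
which is \eqref{eq:H-rank-equals-B-rank}.

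Next we use \eqref{eq:common-reachable-equilibrium-set}, which states $\mathcal E=x_{\mathrm{base}}+\operatorname{im}(H)$. By definition, the dimension of an affine subspace is the dimension of its direction space, so $\dim\mathcal E=\operatorname{rank}(H)=\operatorname{rank}(B)$. This gives \eqref{eq:reachable-set-dimension}.

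For the final claim, note that an affine subspace of $\mathbb R^n$ is proper exactly when its direction space is a proper linear subspace, that is, when its dimension is less than $n$. This yields \eqref{eq:proper-reachable-subspace}.

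Nonsingularity of $K$ is assumed, and by Remark~\ref{rem:equilibrium-map-and-stability} it follows from closed-loop stability whenever that holds. With $K$ nonsingular, every step above is an elementary rank-invariance argument. The one point to state explicitly is that the cancellation of $S^{-1}$ relies on $S\succ0$, which holds because $\Gamma\succ0$ regardless of the rank of $B$.
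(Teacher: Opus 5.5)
Your proposal is correct and is essentially the paper's proof: since $K^{-1}$ and $S^{-1}$ are nonsingular, $\operatorname{rank}(H)=\operatorname{rank}(B)$, and the dimension and properness claims follow from $\mathcal E=x_{\mathrm{base}}+\operatorname{im}(H)$. One small misattribution: $\operatorname{im}(K^{-1}BS^{-1})=K^{-1}\operatorname{im}(BS^{-1})$ holds for any linear map, and invertibility of $K$ is needed only in the next step, to conclude $\dim K^{-1}\operatorname{im}(B)=\operatorname{rank}(B)$.
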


\begin{proof}
Both factors flanking $B$ in $H=K^{-1}BS^{-1}$ are nonsingular, so
$\operatorname{rank}(H)=\operatorname{rank}(B)$. The dimension claim
follows from $\mathcal E=x_{\mathrm{base}}+\operatorname{im}(H)$.
\end{proof}

\begin{remark}[Equilibrium reachability]
\label{rem:equilibrium-versus-dynamic-reachability}
The set $\mathcal{E}$ describes steady-state equilibria obtained by
varying the strategic goal vector $v$. It is not the finite-time
reachable set associated with the classical controllability pair
$(A,B)$. We therefore use the term \emph{equilibrium reachability}
when a distinction is necessary.
\end{remark}

\subsection{Equilibrium Geometry under Constrained Goals}
\label{subsec:constrained-goal-geometry}

Goal constraints can be represented directly in the reduced variable $v$.
For node $i$, define
\[
\begin{aligned}
    \boldsymbol{g}_i
      &:=\bigl[(g_1)_i\ \cdots\ (g_p)_i\bigr]^\top,\\
    D_i&:=\operatorname{diag}(B_{i1},\ldots,B_{ip}),
\end{aligned}
\]
so that $v=\sum_{i=1}^nD_i\boldsymbol{g}_i$.  To separate agreement from
conflict, write
\[
    \boldsymbol{g}_i=\alpha_i\mathbf{1}+c_i,
    \qquad \mathbf{1}^\top c_i=0,
\]
and let $U_\perp\in\mathbb{R}^{p\times(p-1)}$ have orthonormal columns
spanning $\mathbf{1}^\perp$.
For a selected support $\mathcal C=\{i_1,\ldots,i_r\}$, let
$\Delta_{\mathcal C}\subseteq\mathbb R^{r(p-1)}$ denote the admissible
set of stacked conflict coordinates. Its dimension therefore follows
the cardinality of $\mathcal C$.

\begin{proposition}[Support-restricted goal conflict]
\label{prop:support-restricted-goals}
Suppose that $K$ is nonsingular, the common components $\alpha_i$ are fixed,
and conflict is permitted only on
$\mathcal{C}=\{i_1,\ldots,i_r\}$.  With $c_i=U_\perp\eta_i$, define
\[
    T_{\mathcal{C}}
    :=\bigl[D_{i_1}U_\perp\ \cdots\ D_{i_r}U_\perp\bigr],
    \qquad
    \eta:=\operatorname{col}(\eta_{i_1},\ldots,\eta_{i_r}).
\]
Typical choices are $\Delta_{\mathcal C}=\mathbb R^{r(p-1)}$, a box,
or $\Delta_{\mathcal C}=\{\eta:\|\eta\|_2\leq\varepsilon\}$.
The corresponding equilibrium set is
\begin{equation}
    \mathcal{E}_{\mathcal{C},\Delta_{\mathcal C}}
    =x^*(v_0)+HT_{\mathcal{C}}\Delta_{\mathcal{C}},
    \qquad
    v_0:=B^\top\alpha,
    \label{eq:conflict-restricted-equilibrium-set}
\end{equation}
where $\alpha=\operatorname{col}(\alpha_1,\ldots,\alpha_n)$.
If $\Delta_{\mathcal{C}}=\mathbb{R}^{r(p-1)}$, this set is affine and
\begin{equation}
    \dim(\mathcal{E}_{\mathcal{C},\Delta_{\mathcal C}})
    =\operatorname{rank}(HT_{\mathcal{C}})
    \leq\min\{\operatorname{rank}(B),r(p-1)\}.
    \label{eq:conflict-restricted-dimension}
\end{equation}
If $\Delta_{\mathcal{C}}$ is compact and convex, its equilibrium image is
compact and convex; box and Euclidean-ball constraints produce, respectively,
a possibly degenerate zonotope and ellipsoid.
\end{proposition}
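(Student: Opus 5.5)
The plan is to reduce everything to the affine equilibrium map of Proposition~\ref{prop:protocol-independent-equilibrium-map}, $x^*(v)=x_{\mathrm{base}}+Hv$, which holds because $K$ is nonsingular. First I will parametrize the admissible aggregated goals. Since $v=\sum_iD_i\boldsymbol g_i$ and $\boldsymbol g_i=\alpha_i\mathbf1+c_i$, we have $\sum_i\alpha_iD_i\mathbf1=\sum_i\alpha_i\,(B_{i1},\ldots,B_{ip})^\top=B^\top\alpha=v_0$. Conflict is zero off $\mathcal C$, and on $\mathcal C$ we write $c_{i_j}=U_\perp\eta_{i_j}$; this representation is exact because $U_\perp$ has orthonormal columns spanning $\mathbf1^\perp$. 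Hence
\[
v=v_0+\sum_{j=1}^rD_{i_j}U_\perp\eta_{i_j}=v_0+T_{\mathcal C}\eta .
\]
As $\eta$ ranges over $\Delta_{\mathcal C}$, $v$ ranges exactly over $v_0+T_{\mathcal C}\Delta_{\mathcal C}$.

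Second, I will substitute this into the affine map. Linearity of $v\mapsto Hv$ gives $x^*(v)=x_{\mathrm{base}}+Hv_0+HT_{\mathcal C}\eta=x^*(v_0)+HT_{\mathcal C}\eta$. This yields \eqref{eq:conflict-restricted-equilibrium-set} directly.

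Third, I treat the case $\Delta_{\mathcal C}=\mathbb R^{r(p-1)}$. The image $HT_{\mathcal C}\mathbb R^{r(p-1)}=\operatorname{im}(HT_{\mathcal C})$ is a linear subspace, so the set is affine with dimension $\operatorname{rank}(HT_{\mathcal C})$. The bound comes from $\operatorname{rank}(HT_{\mathcal C})\le\min\{\operatorname{rank}(H),\operatorname{rank}(T_{\mathcal C})\}$. By Proposition~\ref{prop:reachable-equilibrium-dimension}, $\operatorname{rank}(H)=\operatorname{rank}(B)$, and $T_{\mathcal C}$ has $r(p-1)$ columns.

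Fourth, for compact convex $\Delta_{\mathcal C}$, the image under the linear map $HT_{\mathcal C}$ is compact (continuity) and convex (linearity), and translation preserves both properties. A box $\prod_k[a_k,b_k]$ is a Minkowski sum of segments, so its linear image is a Minkowski sum of segments $[a_k,b_k]h_k$, where $h_k$ are the columns of $HT_{\mathcal C}$. That is a zonotope, degenerate when the $h_k$ are dependent or zero. A Euclidean ball $\{\|\eta\|\le\varepsilon\}$ maps to $\{HT_{\mathcal C}\eta\}$, which is an ellipsoid, possibly degenerate and lying in $\operatorname{im}(HT_{\mathcal C})$; this can be seen through the SVD of $HT_{\mathcal C}$.

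None of these steps is a real obstacle. The only point needing care is the bookkeeping identity $\sum_i\alpha_iD_i\mathbf1=B^\top\alpha$ together with the fact that the $U_\perp$ parametrization exactly covers the zero-sum conflict vectors. I will verify both entrywise: $(D_i\mathbf1)_m=B_{im}$, and $U_\perp U_\perp^\top$ is the projector onto $\mathbf1^\perp$.
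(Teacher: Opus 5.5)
Your proposal is correct and follows the paper's own argument, which is simply to substitute $v=v_0+T_{\mathcal C}\eta$ (using $\sum_i\alpha_iD_i\mathbf 1=B^\top\alpha$) into the affine map $x^*(v)=x_{\mathrm{base}}+Hv$ of Proposition~\ref{prop:protocol-independent-equilibrium-map}. Your rank bound via $\operatorname{rank}(H)=\operatorname{rank}(B)$, and your treatment of compactness, convexity, and the zonotope and ellipsoid shapes as properties of linear images, fill in exactly the routine details the paper leaves implicit.
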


This follows by substituting $v=v_0+T_{\mathcal C}\eta$ into the
affine equilibrium map. Thus fixed support alone preserves affinity,
whereas bounded amplitudes generally do not. If conflict may occur on
any set of at most $r$ nodes, the attainable family is the generally
nonconvex union
\begin{equation}
    \mathcal{E}_{\leq r}
    =\bigcup_{|\mathcal{C}|\leq r}
      \mathcal{E}_{\mathcal{C},\Delta_{\mathcal C}}.
    \label{eq:sparse-conflict-equilibrium-set}
\end{equation}
Projection onto a fixed convex image is a convex least-squares problem,
whereas projection onto this union is generally nonconvex and may be
nonunique. Without zero-sum coupling, unbounded support restrictions
can even recover the full family whenever every player has a freely
variable goal at an influenced selected node.

Selecting $\mathcal{C}$ solely from a centrality measure of $A$ may therefore
be misleading.  The relevant equilibrium sensitivities are
\begin{equation}
\begin{aligned}
    \frac{\partial x^*}{\partial(g_m)_i}&=B_{im}He_m^{(p)},\\
    \max_{\|\eta\|_2\leq\varepsilon}
    \|x^*-x^*(v_0)\|_2
      &=\varepsilon\|HT_{\mathcal{C}}\|_2.
\end{aligned}
    \label{eq:conflict-amplification}
\end{equation}
Hence $\|HD_iU_\perp\|_2$ is a control-aware single-node score and
$\operatorname{rank}(HT_{\mathcal C})$ counts the induced equilibrium
directions. Pure zero-sum conflict leaves the aggregate social goal
unchanged, so its welfare effect is a convex quadratic in $\eta$.

\subsection{Closest Reachable Equilibrium}
\label{subsec:closest-reachable-equilibrium}

\begin{proposition}[Orthogonal projection onto the equilibrium set]
\label{prop:closest-reachable-equilibrium}
Let $y=x_{\mathrm{des}}-x_{\mathrm{base}}$ and let $H^\dagger$ be the
Moore--Penrose pseudoinverse. The minimum-norm goal is
$v^\dagger=H^\dagger y$, and every minimizing goal has the form
$v=H^\dagger y+(I-H^\dagger H)\zeta$, $\zeta\in\mathbb R^p$. The
unique closest equilibrium and structural distance are
\begin{align}
x^{\mathrm{closest}}&=x_{\mathrm{base}}+HH^\dagger y,\nonumber\\
d_{\mathrm{str}}&=\|(I-HH^\dagger)y\|_2.
\label{eq:closest-and-structural-distance}
\end{align}
For every $v$, orthogonality gives
\begin{equation}
\|x^*(v)-x_{\mathrm{des}}\|_2^2
=\|x^*(v)-x^{\mathrm{closest}}\|_2^2+d_{\mathrm{str}}^2.
\label{eq:structural-within-decomposition}
\end{equation}
\end{proposition}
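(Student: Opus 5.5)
The plan is to reduce the statement to the standard least-squares problem $\min_{v\in\mathbb R^p}\|Hv-y\|_2$. By Proposition~\ref{prop:protocol-independent-equilibrium-map} (with $K$ nonsingular, as assumed throughout this section), every reachable equilibrium has the form $x^*(v)=x_{\mathrm{base}}+Hv$. Hence $x^*(v)-x_{\mathrm{des}}=Hv-y$, and minimizing the distance to $x_{\mathrm{des}}$ over $\mathcal E$ is exactly this least-squares problem in $v$.

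First I will recall the Moore--Penrose properties that $HH^\dagger$ is the orthogonal projector onto $\operatorname{im}(H)$ and $I-H^\dagger H$ is the orthogonal projector onto $\ker(H)$. I will then decompose
\[
Hv-y=\bigl(Hv-HH^\dagger y\bigr)-(I-HH^\dagger)y .
\]
The first term lies in $\operatorname{im}(H)$ and the second lies in $\operatorname{im}(H)^\perp$, so Pythagoras gives
\[
\|Hv-y\|_2^2=\|Hv-HH^\dagger y\|_2^2+\|(I-HH^\dagger)y\|_2^2 .
\]
Since $Hv-HH^\dagger y=x^*(v)-(x_{\mathrm{base}}+HH^\dagger y)$, this is precisely \eqref{eq:structural-within-decomposition} once we set $x^{\mathrm{closest}}:=x_{\mathrm{base}}+HH^\dagger y$ and $d_{\mathrm{str}}:=\|(I-HH^\dagger)y\|_2$.

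Minimizers are then exactly those $v$ with $Hv=HH^\dagger y$. The vector $H^\dagger y$ is one solution, because $HH^\dagger H=H$. The full solution set is therefore $H^\dagger y+\ker(H)=\{H^\dagger y+(I-H^\dagger H)\zeta:\zeta\in\mathbb R^p\}$. The closest equilibrium is unique because the minimizing value of $Hv$ is fixed, even though $v$ itself need not be unique. This is consistent with Proposition~\ref{prop:reachable-equilibrium-dimension}: $\ker(H)$ is nontrivial whenever $\operatorname{rank}(B)<p$.

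For the minimum-norm claim, note that $H^\dagger y\in\operatorname{im}(H^\dagger)=\operatorname{im}(H^\top)=\ker(H)^\perp$. It is therefore orthogonal to $(I-H^\dagger H)\zeta$, which gives $\|v\|_2^2=\|H^\dagger y\|_2^2+\|(I-H^\dagger H)\zeta\|_2^2$, minimized uniquely at $\zeta$-component zero. No step is a real obstacle; the only care needed is to invoke the standing nonsingularity of $K$ so that the affine parametrization of $\mathcal E$ is valid.
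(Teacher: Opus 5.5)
Your proof is correct and follows the same route as the paper. The paper gives no separate proof, only the orthogonality of $\operatorname{im}(H)$ and its complement with $HH^\dagger$ as the orthogonal projector, which you spell out through the affine parametrization $x^*(v)=x_{\mathrm{base}}+Hv$ under nonsingular $K$. One minor misattribution: $H(H^\dagger y)=HH^\dagger y$ holds trivially, and the identity $HH^\dagger H=H$ is what you actually need to show $\operatorname{im}(I-H^\dagger H)=\ker(H)$.
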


Here $HH^\dagger$---not the nonidempotent $P_{\mathrm{EQ}}$---is the
relevant orthogonal projector. The first term in
\eqref{eq:structural-within-decomposition} is a within-set mismatch,
not a measure of noncooperative inefficiency; welfare requires a
same-state Nash--social comparison.

\subsection{Social Optimum of the One-Step Game}
\label{subsec:one-step-social-optimum}

Fix the network state and write
\[
    z:=Ax+\Theta x(0).
\]
The total one-step social cost is the sum of the individual player
costs:
\begin{equation}
    \mathcal{J}(u;z)
    :=
    \sum_{m=1}^{p}
    \left[
        \left\|z+Bu-g_m\right\|_2^2
        +
        \gamma_m u_m^2
    \right].
    \label{eq:total-social-cost}
\end{equation}
Let $g_\Sigma:=\sum_mg_m$, $q:=B^\top g_\Sigma$, and
$S_{\mathrm{SO}}:=pG+\Gamma\succ0$. Then
\begin{equation}
    \mathcal{J}(u;z)
    =
    u^{\top}S_{\mathrm{SO}}u
    +2u^{\top}(pB^{\top}z-q)+\operatorname{const}(z),
    \label{eq:expanded-social-cost}
\end{equation}
Thus
\begin{equation}
u^{\mathrm{SO}}=S_{\mathrm{SO}}^{-1}(q-pB^\top z),
\qquad
u^{\mathrm{NE}}=S^{-1}(v-B^\top z).
\label{eq:Nash-and-social-actions}
\end{equation}

\begin{proposition}[Nash equilibrium versus social optimum]
\label{prop:Nash-versus-social-optimum}
For a fixed free-response vector $z$, the potential function and
the total social cost have Hessians $2(G+\Gamma)$ and
$2(pG+\Gamma)$, respectively. With
$e_{\mathrm{comp}}:=u^{\mathrm{NE}}-u^{\mathrm{SO}}$, the additive
welfare loss is
\begin{equation}
\Delta_{\mathrm{comp}}(z)
:=\mathcal J(u^{\mathrm{NE}};z)-\mathcal J(u^{\mathrm{SO}};z)
=e_{\mathrm{comp}}^\top S_{\mathrm{SO}}e_{\mathrm{comp}}\geq0.
    \label{eq:additive-price-of-competition}
\end{equation}
Moreover,
\begin{align}
 e_{\mathrm{comp}}(z)
 &=S^{-1}v-S_{\mathrm{SO}}^{-1}q
 +(p-1)S^{-1}\Gamma S_{\mathrm{SO}}^{-1}B^\top z,
 \label{eq:competitive-displacement-closed-form}\\
 \Delta_{\mathrm{comp}}(z)
 &=\|pv-q-(p-1)\Gamma u^{\mathrm{NE}}(z)\|_{S_{\mathrm{SO}}^{-1}}^2.
 \label{eq:competition-loss-residual-form}
\end{align}
It follows that $\Delta_{\mathrm{comp}}$ is a convex quadratic in
$z$. For $p\geq2$ and $B\ne0$, its zero set is a proper affine set
(possibly empty) with direction $\ker(B^\top)$; for $p=1$, the loss
vanishes identically.
\end{proposition}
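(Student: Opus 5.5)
The plan is to expand everything around the social optimum and then rewrite the Nash–social displacement using $S_{\mathrm{SO}}=pG+\Gamma=pS-(p-1)\Gamma$.

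\emph{Hessians.} These are read off directly. Lemma~\ref{lem:potential_game} gives $\nabla^2\Phi=2S=2(G+\Gamma)$. Expression \eqref{eq:expanded-social-cost} gives $\nabla^2\mathcal J=2S_{\mathrm{SO}}=2(pG+\Gamma)$, because the quadratic term $u^\top B^\top Bu$ appears once in each of the $p$ summands.

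\emph{Loss as a quadratic form.} $\mathcal J(\cdot;z)$ is a strictly convex quadratic with Hessian $2S_{\mathrm{SO}}$ and minimizer $u^{\mathrm{SO}}$. Its exact second-order Taylor expansion about $u^{\mathrm{SO}}$, where the gradient vanishes, is
\[
\mathcal J(u;z)-\mathcal J(u^{\mathrm{SO}};z)=(u-u^{\mathrm{SO}})^\top S_{\mathrm{SO}}(u-u^{\mathrm{SO}}).
\]
Setting $u=u^{\mathrm{NE}}$ gives \eqref{eq:additive-price-of-competition}, and nonnegativity follows from $S_{\mathrm{SO}}\succ0$.

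\emph{Closed form for $e_{\mathrm{comp}}$.} Subtract the two formulas in \eqref{eq:Nash-and-social-actions}. The coefficient of $B^\top z$ is $-S^{-1}+pS_{\mathrm{SO}}^{-1}$. Factoring gives
\[
S^{-1}\bigl(pS-S_{\mathrm{SO}}\bigr)S_{\mathrm{SO}}^{-1}=(p-1)S^{-1}\Gamma S_{\mathrm{SO}}^{-1},
\]
which is \eqref{eq:competitive-displacement-closed-form}.

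\emph{Residual form.} Compute $S_{\mathrm{SO}}e_{\mathrm{comp}}$ instead of $e_{\mathrm{comp}}$. Using $S_{\mathrm{SO}}=pS-(p-1)\Gamma$ and $Su^{\mathrm{NE}}=v-B^\top z$,
\[
S_{\mathrm{SO}}u^{\mathrm{NE}}=p(v-B^\top z)-(p-1)\Gamma u^{\mathrm{NE}}.
\]
Also $S_{\mathrm{SO}}u^{\mathrm{SO}}=q-pB^\top z$, so the $B^\top z$ terms cancel and
\[
S_{\mathrm{SO}}e_{\mathrm{comp}}=pv-q-(p-1)\Gamma u^{\mathrm{NE}}(z).
\]
Then $e^\top S_{\mathrm{SO}}e=(S_{\mathrm{SO}}e)^\top S_{\mathrm{SO}}^{-1}(S_{\mathrm{SO}}e)$ gives \eqref{eq:competition-loss-residual-form}.

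\emph{Convexity and zero set.} $e_{\mathrm{comp}}$ is affine in $z$, so $\Delta_{\mathrm{comp}}$ is a positive semidefinite quadratic form of an affine map, hence a convex quadratic in $z$.

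Because $S_{\mathrm{SO}}\succ0$, the loss vanishes exactly when the residual vanishes. For $p\ge2$ this means $(p-1)\Gamma u^{\mathrm{NE}}(z)=pv-q$. Solving,
\[
u^{\mathrm{NE}}=c:=\tfrac{1}{p-1}\Gamma^{-1}(pv-q),
\]
and since $u^{\mathrm{NE}}=S^{-1}(v-B^\top z)$ this is equivalent to $B^\top z=v-Sc$. This is an affine set in $z$ with direction $\ker(B^\top)$. It is empty when $v-Sc\notin\operatorname{im}(B^\top)$. It is proper because $B\ne0$ makes $\ker(B^\top)\ne\mathbb R^n$.

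For $p=1$ we have $q=B^\top g_1=v$ and $S_{\mathrm{SO}}=S$, so $u^{\mathrm{NE}}=u^{\mathrm{SO}}$ and the loss is identically zero.

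The main obstacle is the residual form; the key step is to multiply $e_{\mathrm{comp}}$ by $S_{\mathrm{SO}}$ rather than simplify \eqref{eq:competitive-displacement-closed-form} directly. Everything else is routine algebra.
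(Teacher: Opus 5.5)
Your proposal is correct and follows the paper's route. Your exact Taylor expansion about $u^{\mathrm{SO}}$ is the paper's completing-the-square step, and your factorization $pS_{\mathrm{SO}}^{-1}-S^{-1}=S^{-1}(pS-S_{\mathrm{SO}})S_{\mathrm{SO}}^{-1}=(p-1)S^{-1}\Gamma S_{\mathrm{SO}}^{-1}$, combined with the two first-order conditions, is exactly the identity the paper invokes; your explicit derivation of the residual form via $S_{\mathrm{SO}}=pS-(p-1)\Gamma$ and of the zero set $\{z: B^\top z=v-Sc\}$ fills in details the paper leaves to the reader.
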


\begin{proof}
Complete the square in \eqref{eq:expanded-social-cost}, then use the
two first-order conditions and
$pS_{\mathrm{SO}}^{-1}-S^{-1}
=(p-1)S^{-1}\Gamma S_{\mathrm{SO}}^{-1}$.
\end{proof}

When
\[
    \mathcal{J}\bigl(u^{\mathrm{SO}}(z);z\bigr)>0,
\]
the corresponding multiplicative price of competition is
\begin{equation}
    \Pi_{\mathrm{comp}}(z)
    :=
    \frac{
        \mathcal{J}\bigl(u^{\mathrm{NE}}(z);z\bigr)
    }{
        \mathcal{J}\bigl(u^{\mathrm{SO}}(z);z\bigr)
    }
    =
    1+
    \frac{
        \Delta_{\mathrm{comp}}(z)
    }{
        \mathcal{J}\bigl(u^{\mathrm{SO}}(z);z\bigr)
    }
    \geq1.
    \label{eq:multiplicative-PoC}
\end{equation}
The additive measure is used whenever the socially optimal cost
vanishes or when a ratio would be numerically ill-conditioned. These
are instance- and state-specific measures, rather than a worst-case
price-of-anarchy bound over a class of games.

\subsection{Price of Competition at the Closed-Loop Equilibrium}
\label{subsec:equilibrium-price-of-competition}

At the Nash equilibrium, let
$z^*_{\mathrm{NE}}:=Ax^*(v)+\Theta x(0)$ and define
\begin{align}
    \Delta_{\mathrm{comp}}^{*}(v)
    &:=
    \Delta_{\mathrm{comp}}
    \bigl(z^{*}_{\mathrm{NE}}\bigr),
    \\
    \Pi_{\mathrm{comp}}^{*}(v)
    &:=
    \Pi_{\mathrm{comp}}
    \bigl(z^{*}_{\mathrm{NE}}\bigr),
\end{align}
when the ratio is defined. This same-state comparison guarantees
$\Delta_{\mathrm{comp}}^{*}(v)\geq0$.
\subsection{Centralized Social-Feedback Benchmark}
\label{subsec:social-feedback-benchmark}

Repeated application of the social action gives
\begin{align}
    u^{\mathrm{SO}}(k)
    &=S_{\mathrm{SO}}^{-1}
      [q-pB^\top(Ax(k)+\Theta x(0))],
    \nonumber\\
    P_{\mathrm{SO}}&:=pBS_{\mathrm{SO}}^{-1}B^\top,
    \nonumber\\
    F_{\mathrm{SO}}
    &:=
    (I-P_{\mathrm{SO}})A
    =(I+pB\Gamma^{-1}B^\top)^{-1}A,
    \\
    c_{\mathrm{SO}}
    &:=
    (I-P_{\mathrm{SO}})\Theta x(0)
    +
    BS_{\mathrm{SO}}^{-1}q.
\end{align}
Thus, when $I-F_{\mathrm{SO}}$ is nonsingular,
$x_{\mathrm{SO}}^*=(I-F_{\mathrm{SO}})^{-1}c_{\mathrm{SO}}$. The
outcome displacement
$d_{\mathrm{comp}}^{\mathrm{ss}}:=\|x^*(v)-x_{\mathrm{SO}}^*\|_2$
is distinct from the same-state welfare measures above. Comparing
$F_{\mathrm{EQ}}=(I+C)^{-1}A$ with
$F_{\mathrm{SO}}=(I+pC)^{-1}A$ shows that the planner internalizes the
same influence Gramian at $p$ times the strength.

\section{Numerical Experiments}
\label{sec:numerical-experiments}

The following deterministic examples illustrate the stability,
geometry, and welfare results. All reported spectral radii were
computed directly from the corresponding matrices. Code reproducing
the numerical analyses is publicly available on
\href{https://github.com/GabrielGent/BR_OSAOC.git}{GitHub}.

\subsection{Primitive Marginal Network}
\label{subsec:numerical-primitive-network}

Consider the primitive row-stochastic matrix
\begin{equation}
    A_{\mathrm{pr}}
    =
    \begin{bmatrix}
        0.5 & 0.5 & 0   \\
        0.2 & 0.5 & 0.3 \\
        0.1 & 0.3 & 0.6
    \end{bmatrix}.
    \label{eq:primitive-numerical-A}
\end{equation}
Its spectrum is
$\{1,(3+\sqrt3)/10,(3-\sqrt3)/10\}$. Since this matrix is irreducible
and marginal, Lemma~\ref{lem:marginal-fj} forces $\Theta=0$. We use
\begin{equation}
    B_{\mathrm{pr}}
    =
    \begin{bmatrix}
        1   & 0.2 \\
        0.5 & 0.8 \\
        0.2 & 1
    \end{bmatrix}.
    \label{eq:primitive-numerical-B}
\end{equation}
With $\Gamma(t)=tI_2$, the normalized left Perron vector is
$\pi=(11,20,15)^\top/46$ and the damping coefficient is
$\pi^\top B_{\mathrm{pr}}B_{\mathrm{pr}}^\top\mathbf1
=268/115=2.330435\ldots>0$.
For $t=0.1,1,10,100,1000$, the corresponding spectral radii are
$0.140270,0.299121,0.810623,0.977221,0.997675$. Thus the closed loop
is stable and approaches the unit circle from below (Fig.\ref{fig:EQ-stability-sweeps}, graph on the left); numerically,
$t[1-\rho(F_{\mathrm{EQ}}(t))]\to2.330435$, as predicted by the
first-order analysis.
At the other end of the sweep, Lemma~\ref{lem:resolvent} gives
$\rho((I-\Pi_{B_{\mathrm{pr}}})A_{\mathrm{pr}})=0.134049$ as
$t\downarrow0$.

\subsection{An Imprimitive Period-Two Network and the Failure
of Unsigned Alignment}
\label{subsec:numerical-periodic-network}

We next consider the periodic irreducible stochastic matrix
\begin{equation}
    A_{\mathrm{per}}
    =
    \begin{bmatrix}
        0 & 0 & 1 \\
        0 & 0 & 1 \\
        0.9 & 0.1 & 0
    \end{bmatrix},
    \qquad
    \sigma(A_{\mathrm{per}})
    =
    \{1,-1,0\}.
    \label{eq:periodic-example-A}
\end{equation}
There is one player, with
\begin{equation}
    B_{\mathrm{per}}
    =
    b
    =
    \begin{bmatrix}
        0\\10\\5
    \end{bmatrix},
    \qquad
    \Gamma=[\gamma].
    \label{eq:periodic-example-B}
\end{equation}
For $w_1=(1,1,1)^\top$ and $w_{-1}=(1,1,-1)^\top$, both unsigned
alignments are nonzero: $b^\top w_1=15$ and $b^\top w_{-1}=5$.
However, the normalized left eigenvector
$y_{-1}=(0.45,0.05,-0.5)^\top$ gives
$\alpha_{-1,1}=(y_{-1}^\top b)(b^\top w_{-1})=-10<0$, using the
columnwise convention of \eqref{eq:columnwise-damping} (equivalently,
$\bar\gamma=1$ in Theorem~\ref{thm:EQ-large-penalty-stability}). The
characteristic polynomial is
\begin{equation}
\det(\lambda I-F_{\mathrm{EQ}}(\gamma))
=\frac{\lambda[(\gamma+125)\lambda^2+55\lambda-(\gamma+90)]}
{\gamma+125}.
\label{eq:periodic-characteristic-polynomial}
\end{equation}
Its normalized quadratic factor satisfies
$q_\gamma(-1)=-20/(\gamma+125)<0$, so one
eigenvalue is below $-1$ for every finite $\gamma>0$. For
$\gamma=0.1,10,1000,10000$, the spectral radii are respectively
$1.096492,1.088145,1.009069,1.000990$: they approach one from above (Fig.\ref{fig:EQ-stability-sweeps}, graph on the right),
with $\rho(F_{\mathrm{EQ}}(\gamma))=1+10/\gamma+O(\gamma^{-2})$.
This single-player example tests stability, not strategic inefficiency;
indeed its welfare loss is identically zero.

\begin{figure*}[t]
    \centering
    \includegraphics[width=0.80\textwidth]
    {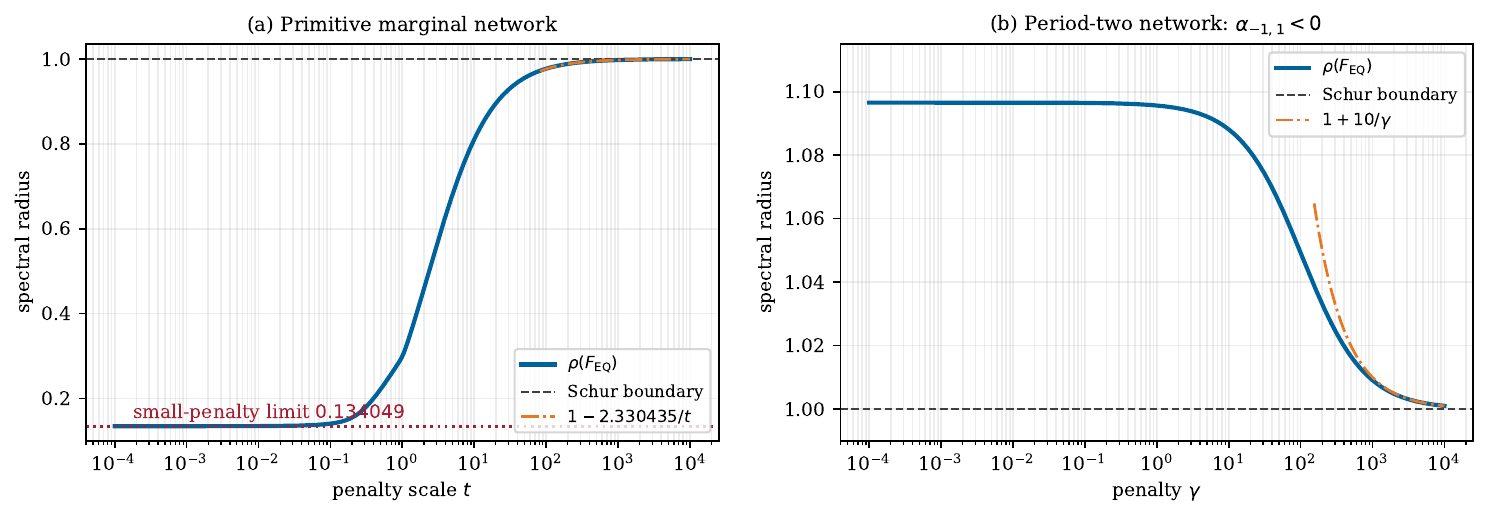}
    \caption{Penalty sweeps for exact Nash-equilibrium feedback.
    Left: primitive network, for which the spectral radius approaches
    one from below. Right: periodic network with a negatively damped
    $-1$ mode, for which the spectral radius approaches one from
    above. The line marks the Schur boundary; annotations show the
    small-penalty limit and large-penalty first-order asymptotes.}
    \label{fig:EQ-stability-sweeps}
\end{figure*}

\subsection{Frozen-State Convergence versus Closed-Loop Stability}
\label{subsec:numerical-augmented-stability}

We now use $A=A_{\mathrm{pr}}$ from
\eqref{eq:primitive-numerical-A}, three players, and identical
influence directions:
\begin{equation}
    B_{\mathrm{it}}
    =
    \begin{bmatrix}
        1&1&1\\
        1&1&1\\
        1&1&1
    \end{bmatrix},
    \qquad
    \Gamma=\gamma I_3.
    \label{eq:iterative-example-B}
\end{equation}
The influence directions overlap completely and
$B_{\mathrm{it}}^\top B_{\mathrm{it}}=3\mathbf1\mathbf1^\top$.

For the frozen-state PBR iteration,
\begin{equation}
    \rho(T_{\mathrm{PBR}})
    =
    \frac{6}{3+\gamma}.
    \label{eq:iterative-PBR-exact-radius}
\end{equation}
The spectral radii in this experiment are independent of $v$, which
only changes the affine term.

\begin{proposition}[Exact PBR thresholds in this experiment]
\label{prop:iterative-thresholds}
For \eqref{eq:primitive-numerical-A} and
\eqref{eq:iterative-example-B},
\begin{align}
\det(\lambda I-\mathcal A_{\mathrm{PBR}})
&=\frac{(50\lambda^2-30\lambda+3)
((\gamma+3)\lambda-3)^2}{50(\gamma+3)^3}\nonumber\\[-1mm]
&\quad{}\times
((\gamma+3)\lambda^2+(12-\gamma)\lambda-6).
\label{eq:pbr-augmented-characteristic}
\end{align}
Consequently frozen PBR converges if and only if $\gamma>3$, whereas
the augmented PBR loop is Schur if and only if $\gamma>15/2$.
\end{proposition}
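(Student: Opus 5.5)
The plan is to exploit the very special structure $B_{\mathrm{it}}=\mathbf1\mathbf1^\top$ (with $n=p=3$) to block-diagonalize $\mathcal A_{\mathrm{PBR}}$ from Theorem~\ref{thm:augmented_stability}, then read off Schur conditions factor by factor.

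\textbf{Ingredients.} We have $G=3\mathbf1\mathbf1^\top$, $S=3\mathbf1\mathbf1^\top+\gamma I$, and $D=(3+\gamma)I$. The Jacobi matrices are then
\[
N_{\mathrm{PBR}}=D-S=3(I-\mathbf1\mathbf1^\top),\qquad T_{\mathrm{PBR}}=\tfrac{3}{3+\gamma}(I-\mathbf1\mathbf1^\top).
\]
Since $I-\mathbf1\mathbf1^\top$ has eigenvalues $-2$ (on $\mathbf1$) and $1$ (on $\mathbf1^\perp$, twice), we get $\sigma(T_{\mathrm{PBR}})=\{-6/(3+\gamma),\,3/(3+\gamma),\,3/(3+\gamma)\}$. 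This recovers \eqref{eq:iterative-PBR-exact-radius}, so frozen convergence holds iff $6<3+\gamma$, i.e.\ $\gamma>3$. As a cross-check, Proposition~\ref{prop:frozen-state-convergence} gives $2D-S=(3+\gamma)I+3I-3\mathbf1\mathbf1^\top\succ0$ iff $6+\gamma-9>0$, the same condition.

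\textbf{Block reduction.} Next we assemble the blocks of \eqref{eq:augmented_matrix}. With $Q=BD^{-1}B^\top=\tfrac{3}{3+\gamma}\mathbf1\mathbf1^\top$, the blocks are
\[
(I-Q)A,\qquad BD^{-1}N=\tfrac{3}{3+\gamma}\,\mathbf1\mathbf1^\top(I-\mathbf1\mathbf1^\top)=-\tfrac{6}{3+\gamma}\,\mathbf1\mathbf1^\top,
\]
\[
-D^{-1}B^\top A=-\tfrac{1}{3+\gamma}\,\mathbf1\mathbf1^\top A,\qquad D^{-1}N=T_{\mathrm{PBR}}.
\]
The control space splits as $\operatorname{span}\{\mathbf1\}\oplus\mathbf1^\perp$, and the state space is handled using $A\mathbf1=\mathbf1$ and the left Perron vector $\pi=(11,20,15)^\top/46$.

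On the control component in $\mathbf1^\perp$, the coupling blocks vanish, because $\mathbf1^\top$ kills $\mathbf1^\perp$ and the upper block only sees $\mathbf1^\top u$. This gives the eigenvalue $3/(3+\gamma)$ twice, i.e.\ the factor $((\gamma+3)\lambda-3)^2$.

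The remaining dynamics are in $x\in\mathbb R^3$ together with the scalar $s=\mathbf1^\top u$. Here $\mathbf1\mathbf1^\top A x=\mathbf1(\mathbf1^\top A x)$, and $\mathbf1^\top A=(0.8,1.3,0.9)$ is not proportional to $\pi^\top$. I would therefore change state coordinates to $x=a\mathbf1+\tilde x$, with $\tilde x$ in a complement invariant under the rank-one corrections, or more simply just compute the $4\times4$ determinant directly. Since everything except $A$ is rank one in $\mathbf1$, the matrix determinant lemma should give
\[
\det(\lambda I-(I-Q)A)=\det(\lambda I-A)\bigl(1+\tfrac{3}{3+\gamma}\mathbf1^\top(\lambda I-A)^{-1}A\mathbf1\bigr).
\]
Here $\mathbf1^\top(\lambda I-A)^{-1}A\mathbf1=3/(\lambda-1)$, so this collapses nicely. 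Then a Schur complement for the scalar $s$ row closes the computation. I expect to land on \eqref{eq:pbr-augmented-characteristic}. The factor $50\lambda^2-30\lambda+3$ is $50$ times the characteristic polynomial of $A$ on its non-Perron part (its roots are $(3\pm\sqrt3)/10$), and the Perron part mixes with $s$ to form the quadratic.

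\textbf{Schur analysis.} Finally, the roots $(3\pm\sqrt3)/10$ and $3/(3+\gamma)$ lie inside the unit disc for all $\gamma>0$. The quadratic $(\gamma+3)\lambda^2+(12-\gamma)\lambda-6$ must be tested via the Jury conditions $|c|<a$, $a+b+c>0$, $a-b+c>0$ with $a=\gamma+3$, $b=12-\gamma$, $c=-6$:
\begin{itemize}
\item $|c|<a$ gives $\gamma>3$;
\item $a+b+c=9>0$ always holds;
\item $a-b+c=2\gamma-15>0$ gives $\gamma>15/2$.
\end{itemize}
Hence the augmented loop is Schur iff $\gamma>15/2$.

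\textbf{Main obstacle.} The main obstacle is the bookkeeping in the determinant reduction, i.e.\ correctly handling the mixed rank-one coupling between $x$ and $s$. I would first verify the final polynomial symbolically, e.g.\ by checking it at $\lambda=0$ and $\lambda=1$ against direct $6\times6$ determinants.
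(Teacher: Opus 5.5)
Your proposal is correct and follows the paper's argument: the frozen threshold comes from $\rho(T_{\mathrm{PBR}})=6/(3+\gamma)$, and the augmented threshold comes from applying the quadratic Jury conditions to the last factor, giving $\gamma>3$, $9>0$ and the binding $2\gamma-15>0$, while the remaining factors are stable for every $\gamma>0$. The paper simply asserts the factorization \eqref{eq:pbr-augmented-characteristic}, whereas you derive it by splitting off the invariant subspace $\{0\}\times\mathbf 1^\perp$ and then applying the determinant lemma and a scalar Schur complement to the $(x,\mathbf 1^\top u)$ block; this does go through, producing the normalized factor $\lambda^2+(5c-1)\lambda-2c$ with $c=3/(3+\gamma)$.
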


\begin{proof}
The frozen condition follows from
\eqref{eq:iterative-PBR-exact-radius}. In
\eqref{eq:pbr-augmented-characteristic}, the first two factors have
roots in the unit disk for every $\gamma>0$. Applying the quadratic
Jury conditions to the last factor yields $\gamma>3$, a vacuous
positive inequality, and the binding condition $2\gamma-15>0$.
\end{proof}

Thus the exact separation gap is $\gamma\in(3,15/2]$, whereas SBR
converges for every $\gamma>0$. At $\gamma=6$,
$\rho(T_{\mathrm{PBR}})=0.666667$ but
$\rho(\mathcal A_{\mathrm{PBR}})=1.215250$, so frozen-state PBR
convergence coexists with an unstable one-sweep closed loop. For the
same parameters, $\rho(T_{\mathrm{SBR}})=0.192450$ and
$\rho(\mathcal A_{\mathrm{SBR}})=0.521584$; in fact
$\rho(T_{\mathrm{SBR}})=3^{-3/2}$. Figure~\ref{fig:iterative-stability-sweeps}
shows the complete sweeps.

\begin{figure*}[t]
    \centering
    \includegraphics[width=0.80\textwidth]
    {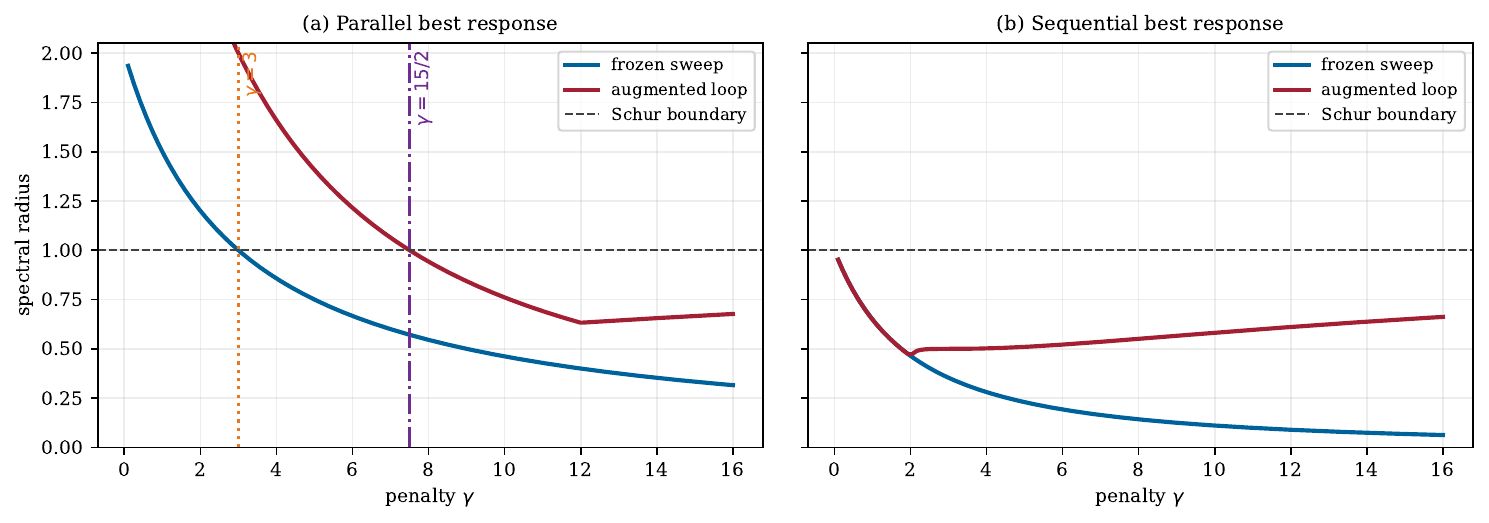}
    \caption{Frozen-state and augmented spectral radii for PBR and
    SBR as functions of $\gamma$. Vertical markers identify the exact
    frozen PBR and augmented PBR thresholds, $3$ and $15/2$.}
    \label{fig:iterative-stability-sweeps}
\end{figure*}

\subsection{Structural and Competitive Inefficiency}
\label{subsec:numerical-competition-metrics}

Finally, we return to the primitive matrices
$A_{\mathrm{pr}}$ and $B_{\mathrm{pr}}$ in
\eqref{eq:primitive-numerical-A} and
\eqref{eq:primitive-numerical-B}. We use
\[
    \Gamma=I_2,
    \qquad
    \Theta=0,
\]
and choose
\begin{equation}
    g_1=
    \begin{bmatrix}
        1\\0\\0
    \end{bmatrix},
    \qquad
    g_2=
    \begin{bmatrix}
        0\\0\\1
    \end{bmatrix}.
    \label{eq:competition-example-goals}
\end{equation}
Here $v=(1,1)^\top$ and $q=(1.2,1.2)^\top$. The Nash and centralized
closed loops are stable, with spectral radii $0.299121$ and $0.221680$,
and their steady states are separated by
$d_{\mathrm{comp}}^{\mathrm{ss}}=0.371726$. At the Nash equilibrium
state, the Nash and social costs are $1.575607$ and $1.348770$;
therefore $\Delta_{\mathrm{comp}}^*=0.226837$ and
$\Pi_{\mathrm{comp}}^*=1.168181$. Thus noncooperative play raises the
same-state one-step social cost by approximately $16.8\%$.

For $x_{\mathrm{des}}=(g_1+g_2)/2$, the structural and within-set
distances are $0.401416$ and $0.354051$, while the total target error
is $0.535244$. The identity
$(0.535244)^2=(0.401416)^2+(0.354051)^2$ holds up to rounding,
confirming \eqref{eq:structural-within-decomposition}.

\subsection{Goal Conflicts Restricted to Selected Nodes}
\label{subsec:numerical-constrained-goals}

We reuse $A_{\mathrm{pr}}$, $B_{\mathrm{pr}}$, and $\Gamma=I_2$ from the
primitive-network experiment.  For two players, let
$U_\perp=2^{-1/2}[1\ {-1}]^\top$, choose the common goal
$\bar g=0.5\mathbf{1}$, and introduce conflict at node $i$ through
\begin{equation}
\begin{aligned}
    g_1(\eta)&=\bar g+\frac{\eta}{\sqrt{2}}e_i^{(n)},
    &g_2(\eta)&=\bar g-\frac{\eta}{\sqrt{2}}e_i^{(n)},\\
    |\eta|&\leq\varepsilon:=\frac{1}{\sqrt{2}}.
\end{aligned}
    \label{eq:numerical-supported-goals}
\end{equation}
The bound keeps all goal components in $[0,1]$. At $\eta=0$,
$v_0=(0.85,1)^\top$, $x^*(v_0)=z_0=0.5\mathbf1$, and both Nash and
social actions vanish. With
$D_i:=\operatorname{diag}(B_{i1},B_{i2})$, define
\[
\begin{aligned}
    \kappa_i&:=\|HD_iU_\perp\|_2,\\
    \beta_i&:=(S^{-1}D_iU_\perp)^\top
             S_{\mathrm{SO}}(S^{-1}D_iU_\perp).
\end{aligned}
\]
Since $g_1(\eta)+g_2(\eta)=\mathbf{1}$, the social optimum at $z_0$
remains zero, whereas
\begin{equation}
\begin{aligned}
    \|x^*(\eta)-x^*(v_0)\|_2&=\kappa_i|\eta|,\\
    \Delta_{\mathrm{comp}}^{(i)}(z_0;\eta)&=\beta_i\eta^2.
\end{aligned}
    \label{eq:numerical-conflict-verification}
\end{equation}

\begin{table}[t]
    \centering
    \caption{Perron weight, conflict amplification, and welfare loss
    evaluated at $z_0$.}
    \label{tab:constrained-goal-nodes}
    \setlength{\tabcolsep}{3.5pt}
    \begin{tabular}{c|cccc}
        \hline
        Node $i$ & $\pi_i$ & $\kappa_i$ &
        $\varepsilon\kappa_i$ &
        $\Delta_{\mathrm{comp}}^{(i)}(z_0;\pm\varepsilon)$\\
        \hline
        1 & 0.239130 & 0.489183 & 0.345905 & 0.201380\\
        2 & 0.434783 & 0.449150 & 0.317597 & 0.178809\\
        3 & 0.326087 & 0.480138 & 0.339509 & 0.182720\\
        \hline
    \end{tabular}
\end{table}

Node 2 has the largest Perron weight, but node 1 has the largest
equilibrium gain and endpoint welfare loss. Each $HD_iU_\perp$ has
rank at most one by Proposition~\ref{prop:support-restricted-goals},
and exactly one here because $D_iU_\perp\ne0$ and
$H=K^{-1}BS^{-1}$ is injective for this full-column-rank $B$.
Thus network centrality need not rank conflict amplification, whereas
the control-aware gain predicts the exact worst-case displacement.

\section{Conclusion}
\label{sec:conclusion}

Competitive one-step-ahead control of Friedkin--Johnsen networks is an
exact potential game even under overlapping influence and conflicting
goals. Its common Nash fixed point must be distinguished from protocol
implementation: frozen-state best-response convergence does not imply
one-sweep closed-loop stability, nor conversely. Sequential frozen
sweeps always converge, parallel sweeps always converge for two
players, and the exact three-player example identifies a nonempty gap
between frozen and augmented PBR thresholds. For exact-equilibrium
feedback, the resolvent form reveals the feedback geometry, while
signed left--right modal damping gives a first-order sharp
large-penalty stability test.

The equilibrium family is governed by the influence rank, while
support, amplitude, and sparsity restrictions on goal conflict determine
its constrained geometry. Projection separates structural target error
from within-set mismatch, and the same-state cost gap isolates welfare
loss. The examples confirm these distinctions and show that network
centrality alone need not predict conflict amplification. Extensions to
repeated peripheral eigenvalues and multi-step actions remain open.
Other natural directions include the zero first-order damping case,
constrained player actions, and trajectory-level rather than
same-state welfare comparisons.

\appendices
\section{Proofs of the Large-Penalty Results}
\label{app:large-penalty-proofs}

\begin{proof}[Proof of Theorem~\ref{thm:EQ-large-penalty-stability}]
Let $G=B^\top B$ and $\varepsilon=t^{-1}$. A Neumann expansion gives
\[
(G+t\bar\Gamma)^{-1}
=\varepsilon\bar\Gamma^{-1}+O(\varepsilon^2),
\]
and hence, with $C=B\bar\Gamma^{-1}B^\top$,
\begin{equation}
F_{\mathrm{EQ}}(t)=A-\varepsilon CA+O(\varepsilon^2).
\label{eq:appendix-F-expansion}
\end{equation}
If $\rho(A)<1$, continuity of the eigenvalues proves the first claim.
Now let $\rho(A)=1$. Strictly stable eigenvalues remain inside the
unit disk for small $\varepsilon$. For a simple peripheral eigenvalue,
first-order perturbation theory \cite[Thm.~2.3, p.~183]{StewartSun90}
and $Aw_\mu=\mu w_\mu$ yield
\[
\lambda_\mu(\varepsilon)
=\mu(1-\varepsilon y_\mu^*Cw_\mu)+O(\varepsilon^2),
\]
so, because $|\mu|=1$,
\[
|\lambda_\mu(\varepsilon)|^2
=1-2\varepsilon\operatorname{Re}(y_\mu^*Cw_\mu)
+O(\varepsilon^2).
\]
Condition~\eqref{eq:spectral-damping-condition} therefore moves every
peripheral eigenvalue strictly inside the unit disk for sufficiently
small positive $\varepsilon$. Finiteness of the peripheral spectrum
provides a common bound. If instead $\alpha_\mu<0$, the same expansion
gives
$|\lambda_\mu(\varepsilon)|
=1+\varepsilon|\alpha_\mu|+O(\varepsilon^2)>1$ for all sufficiently
small positive $\varepsilon$, proving
\eqref{eq:large-penalty-instability-converse}.
\end{proof}

\begin{proof}[Proof of Corollary~\ref{cor:independent-penalty-stability}]
Remove zero columns of $B$, set
$\delta=(\min_{m\in\mathcal I}\gamma_m)^{-1}$, and write
$\Gamma^{-1}=\delta R$, where
$R=\operatorname{diag}(r_m)_{m\in\mathcal I}$,
$0<r_m\leq1$, and $\max_{m\in\mathcal I}r_m=1$. The closure
\[
\mathcal R=\{R:0\leq r_m\leq1,\ 
\max_{m\in\mathcal I}r_m=1\}
\]
is compact. The inverse expansion below has a remainder uniform on
$\mathcal R$:
\[
F_{\mathrm{EQ}}(\Gamma)
=A-\delta BRB^\top A+O(\delta^2).
\]
For every peripheral mode,
\[
\operatorname{Re}(y_\mu^*BRB^\top w_\mu)
=\sum_{m\in\mathcal I} r_m\alpha_{\mu m}>0.
\]
Since at least one $r_m$ equals one and all $\alpha_{\mu m}$ are
positive, this expression is uniformly bounded away from zero. The
perturbation argument used above is therefore uniform in $R$ and
applies whenever $\delta$ is sufficiently small.
\end{proof}

	\bibliographystyle{IEEEtran}
	\bibliography{refs}
\end{document}